\theoremstyle{plain}
\newtheorem{theorem}{Theorem}[section]
\newtheorem{proposition}[theorem]{Proposition}
\newtheorem{lemma}[theorem]{Lemma}
\theoremstyle{definition}
\newtheorem{definition}[theorem]{Definition}
\newtheorem{assumption}[theorem]{Assumption}
\theoremstyle{remark}
\newtheorem{remark}[theorem]{Remark}
\icmltitlerunning{Under review}
\definecolor{codegreen}{rgb}{0,0.6,0}
\definecolor{codegray}{rgb}{0.5,0.5,0.5}
\definecolor{codepurple}{rgb}{0.58,0,0.82}
\definecolor{backcolour}{rgb}{0.95,0.95,0.92}
\lstdefinestyle{mystyle}{
    commentstyle=\color{red!50!green!50!blue!50},  
    keywordstyle=\color{blue!70},  
    numberstyle=\tiny\color{codegray},  
    stringstyle=\color{codepurple},
    basicstyle=\ttfamily\scriptsize,
    breakatwhitespace=false,
    breaklines=true,  
    captionpos=b,
    keepspaces=true,
    numbers=left,  
    numbersep=5pt,
    showspaces=false,
    showstringspaces=false,  
    showtabs=false,
    tabsize=2,
    frame=single  
}
\begin{document}

\twocolumn[
\icmltitle{FeedSign: Robust Full-parameter Federated Fine-tuning of Large Models with Extremely Low Communication Overhead of One Bit}



\icmlsetsymbol{equal}{*}

\begin{icmlauthorlist}
\icmlauthor{Zhijie Cai}{equal,xxx,yyy}
\icmlauthor{Haolong Chen}{equal,xxx,yyy}
\icmlauthor{Guangxu Zhu}{xxx,yyy}
\end{icmlauthorlist}

\icmlaffiliation{yyy}{School of Science and Engineering, The Chinese University of Hong Kong, Shenzhen, Shenzhen, Guangdong, China}
\icmlaffiliation{xxx}{Shenzhen Research Institute of Big Data, Shenzhen, Guangdong, China}

\icmlcorrespondingauthor{Guangxu Zhu}{gxzhu@sribd.cn}


\vskip 0.3in
]



\printAffiliationsAndNotice{\icmlEqualContribution} 

\begin{abstract}

Federated fine-tuning (FFT) attempts to fine-tune a pre-trained model with private data from distributed clients by exchanging models rather than data under the orchestration of a parameter PS (PS). To overcome the bottleneck forged by the growing communication and memory overhead for clients in such systems due to the ever-increasing model sizes, we propose \textit{FeedSign}, an FFT algorithm in which the upload and download payload is exactly $1$ bit per aggregation step. Meanwhile, the memory overhead is squeezed to the amount needed for inference. The reduction is realized by utilizing zeroth-order (ZO) optimizers on large models and shared pseudo-random number generators (PRNG) across devices to represent the gradient estimates as seed-sign pairs. We conduct theoretical analysis on \textit{FeedSign} and show that it converges at an exponential rate $\mathcal{O}(e^{-t})$, where $t$ is the number of elapsed steps under widely used assumptions. Moreover, FeedSign is robust against data heterogeneity and Byzantine attacks. We extensively tested \textit{FeedSign} on models across different structures and sizes (11M to 13B). We found that the proposed method performs better or closely, depending on the scenarios, than its ZO and FO counterparts, albeit with an orders-of-magnitude lower communication overhead. We also discuss some interesting advantages as byproducts guaranteed by the design of \textit{FeedSign}.

\end{abstract}

\section{Introduction}

The development of deep learning (DL) has allowed us to enjoy better intelligent services by training larger models on broader data. While large models demonstrate exemplary performance in general cases, they hold promise to provide more tailored services and greatly improve their ability on intelligent applications if the model can access the rich but privacy-sensitive local data of users. Moreover, since the projected size of datasets used for model pre-training is approaching that of total human data stock \cite{villalobos2024will}, federated learning (FL) \cite{pmlr-v54-mcmahan17a} stands out as a solution to achieve privacy-preserving distributed learning by frequently averaging the model parameters generated by the local data but leaving the data intact at their holders. The algorithm is known as federated averaging (\textit{FedAvg}). When the paradigm is applied to a fine-tuning task, it is often termed federated fine-tuning (FFT) \cite{popov2018distributed}.

Such a learning paradigm demands that stochastic gradient descent (SGD) algorithms be run on client devices. However, the assumed participating client devices are usually resource-restrictive devices like phones and tablets, where SGD can be a heavy computation burden. Moreover, as large models become increasingly popular due to their versatility and great performance, model update aggregation under the FFT paradigm becomes prohibitively expensive. Different methods have been proposed to lower communication and computation costs. Pioneering works include model splitting \cite{thapa2022splitfed} that proposed splitting the DL model into two parts so that most of the computation overhead can be unloaded to the PS as well as reducing communication costs. As for the special case of large models, one of the most successful methods is parameter-efficient fine-tuning (PEFT), which focuses on updating only a small part of the large models and lowering communication and computation costs. Some of the techniques include LoRA \cite{hu2021lora}, Prefix Fine-tuning \cite{li2021prefix}, BitFit \cite{zaken2021bitfit} and Adapter \cite{narayanan2021efficient, pfeiffer2020adapterfusion}.

\begin{figure}[t]
    \centering
    \includegraphics[width=1\linewidth]{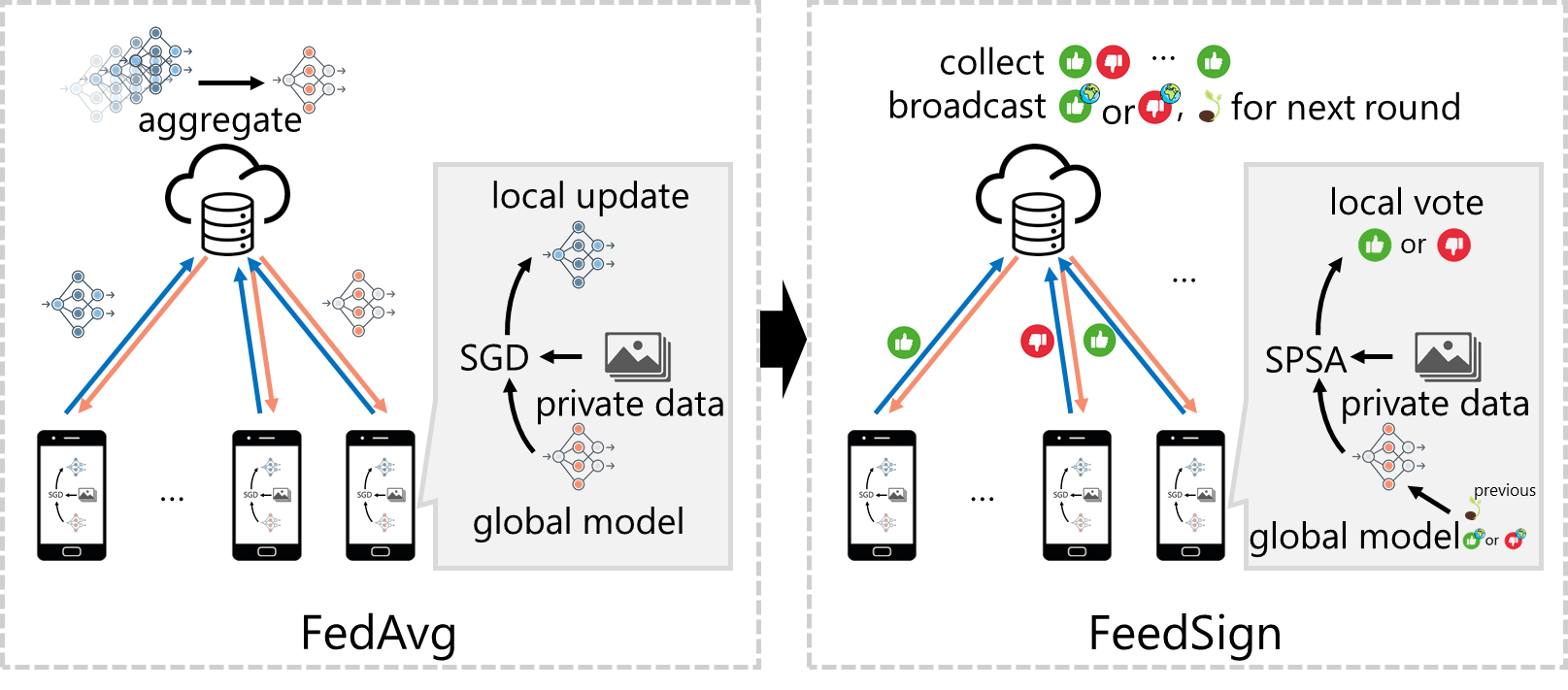}
    \caption{Overview of \textit{FedAvg} and \emph{FeedSign}}
    \label{fig:overview}
\end{figure}

However, while the methods above hold the promise of largely reducing the number of trainable parameters (which scales to the communication cost) with little performance drop, the communication cost of FFT tasks is still formidable. As a qualitative comparison, to participate in FFT on an OPT-1.3B model, a client device will upload around $48$ million float numbers during a communication round, which takes up around $48$ MB, the size of $4$ minutes of YouTube full high definition (FHD, 1080p) video, whereas FFT usually takes hundreds if not thousands of communication rounds to converge, apart from $10$ times of memory demand compared to that of inference\footnote{Specifications on the measurements can be found in Appendix \ref{appd:memory}.}.

A series of pioneering works \cite{xu2024fwdllm, qin2023federated} leverages zeroth-order optimization and the shared Pseudo Random Number Generators (PRNG) across modern deep learning frameworks like PyTorch \cite{paszke2019pytorch} and Tensorflow \cite{abadi2016tensorflow} to lower the per-step uplink communication overhead to acceptable level and the memory demand to an amount around that of inference. However, we show that the per-step uplink and downlink communication overhead can be further reduced to $1$ bit per step regardless of model size with little if not zero performance loss 
\color{black} %
but several advantages, including data heterogeneity, Byzantine resilience, and parameter security. Specifically, our contributions are as follows:
\color{black}

\begin{enumerate}
    \item Establishing upon MeZO \cite{malladi2023fine}, we proposed \emph{FeedSign}, an FFT framework compatible with both full-parameter fine-tuning and PEFT, 
    \color{black} %
    featuring per-step uplink communication overhead of $\boldsymbol{1}$ \textbf{bit} and inference-level memory demand, \textbf{regardless of model size}. \color{black}
    The reduction is realized by utilizing zeroth-order (ZO) optimizers on large models and shared pseudo-random number generators (PRNG) across devices to split the gradient estimates from the clients to \textbf{seed-sign pairs}, each comprises 1) a direction corresponding to a designated random seed and 2) a binary vote from the client indicating whether the seed-corresponding direction grants a local loss descent, which is the only information the clients should convey to the PS.
    \item We provide the convergence analysis of our method and the baselines. We found that it converges at an exponential rate $\mathcal{O}(e^{-t})$, the same rate as in first-order (FO) methods can attain in big $\mathcal{O}$ notation, where $t$ is the number of elapsed steps. The analysis implies that \emph{FeedSign} has additional effects addressing some long-standing problems of FL, including communication bottleneck, data heterogeneity, and Byzantine vulnerability.
    \item We conduct comprehensive experiments across different model types (ResNet, ViT, and OPT) and scales (11M to 13B) to verify the performance of \emph{FeedSign} across various downstream language and vision tasks. It is observed that, 
    \begin{enumerate}
        \item Compared with the conventional FO counterpart, with \textbf{close-to-zero} communication overhead regardless of the model size ($1$ bit versus $24$ GB per step for OPT-13B) and inference-level memory (around $1/12$ for transformer-based models \cite{malladi2023fine}), \emph{FeedSign} achieves comparable test performance;
        \item Compared with federated ZO baselines, with at most $1/64$ of communication overhead, \emph{FeedSign} achieves comparable test performance in general settings while outperforming remarkably under data heterogeneity and Byzantine attacks.
    \end{enumerate}
    \item We discuss some interesting features as byproducts that \emph{FeedSign} will bring to an FL system on parameter security, hardware requirements, and differential privacy.
\end{enumerate}

\section{Related Works}

\subsection{Federated Learning}

Federated learning (FL) contrasts with centralized learning by training a shared model using data from distributed owners without directly sharing the data, thereby preserving data privacy \cite{pmlr-v54-mcmahan17a}. Although centralized learning usually provides an upper bound of performance, FL has its unique advantages as it can access data originally unavailable to centralized learning due to privacy concerns \cite{yang2018applied, hard2018federated, cormode2018privacy}. It is also suitable for uniting siloed raw data without compromising confidentiality as in various fields like healthcare \cite{ogier2022flamby, rieke2020future} and financing \cite{long2020federated}.

However, FL's privacy protection comes at the cost of frequent model parameter exchanges, creating a communication bottleneck \cite{konevcny2016federated, kairouz2021advances}. The success of large pre-trained models in various tasks \cite{liu2019roberta, achiam2023gpt, jiang2023mistral, dosovitskiy2020image} highlights the need to address this bottleneck. Parameter-efficient fine-tuning techniques, which can reduce the number of trainable parameters, show promise when combined with FL to minimize communication overhead \cite{sun2024improving, cho2023heterogeneous, zhang2023fedpetuning, kim2023client}. \emph{SignSGD} \cite{bernstein2018signsgd} proved that an extreme elementwise one-bit quantization on the gradient estimation can largely cut down the communication load while preserving performance. But we notice that the communication overhead inevitably scales to the number of trainable parameters in all of the methods above, and the unmodified version is often considered \textit{impractical} in terms of communication overhead.

\subsection{ZO Optimization for DL and FL}

Over the years, FO methods like SGD and its variants have been the default choice for DL model training \cite{gardner1984learning, amari1993backpropagation, bottou2010large, kingma2014adam, bottou2018optimization}. This method aims to minimize an objective $\mathcal{L}(\boldsymbol{w})$ that characterizes how bad a function $f_{\boldsymbol{w}}$ parameterized by a numerical vector $\boldsymbol{w}$ is mapping from an input space $\mathcal{X}$ to an output space $\mathcal{Y}$ using the chain rule and automatic differentiation \cite{griewank2014automatic, paszke2017automatic} to approximate the derivative of $\mathcal{L}(\boldsymbol{w})$ with respect to $\boldsymbol{w}$. Nonetheless, some objectives of interest are non-differentiable or whose gradients are expensive to compute, calling for alternatives. They are usually known as ZO optimization since they do not require explicit gradient information for objective minimization. The combination of ZO optimization and FL has been a hot research topic in recent years since, in FL settings, clients are usually resource-limited, and ZO can make the estimation of gradients less expensive \cite{fang2022communication, qiu2023zeroth, chen2024fine, ling2024convergence, maritan2024fedzen}. However, the communication bottleneck remains a huge problem for real-world deployment.

Notably, \emph{FwdLLM} \cite{xu2024fwdllm}, \emph{FedKSeed} \cite{qin2023federated} and \cite{zelikman2023just} are the closest works to ours, where the authors discussed a federated fine-tuning framework that exchanges models by exchanging \textbf{seed-projection pairs} to lower the communication overhead of FFT tasks to an acceptable level. \cite{zelikman2023just} proposed an $8$-bit quantization on the projection to further reduce communication overhead. However, they did not provide convergence analyses to justify the performance. A convergence analysis result on distributed zeroth-order optimization-based deep learning originated from \cite{fang2022communication} is invoked from \emph{FedKSeed} \cite{qin2023federated}, but is only dedicated to the ideal case where all clients are honest and hold independent identically distributed data. Moreover, they are both involved in complicated post-processing or repetitive gradient estimation during each step, which could slow down the learning process. In this work, we not only aim to push the method of seed-projection pairs for model exchanging to its limits but also explore more challenging scenarios both theoretically and empirically. 

\subsection{Data Heterogeneity, Byzantine Attacks, and Compression in FL}

Data heterogeneity is a critical concern in federated learning \cite{ye2023heterogeneous}, where each user holds inconsistent shards that do not represent the overall data distribution well, causing divergent updates and undermining training effectiveness. The inconsistency can cause the global model to converge to suboptima with potential performance loss \cite{karimireddy2020scaffold, li2020federated}. Efforts to address this challenge under FO setting are ongoing \cite{qu2022rethinking, fang2023robust, jiang2023fair, chen2024fair}. 

Additionally, Byzantine clients who maliciously alter their data or models can significantly degrade FL performance, necessitating robust FL algorithms \cite{so2020byzantine, tian2022comprehensive}. Within the context of ZO optimization, \emph{DPZero} \cite{zhang2023dpzero} and \emph{CYBER-0} \cite{delgado2024communication} marked an initial attempt to enhance the Byzantine resilience of ZO-based FL. Various aggregation methods have been proposed to improve FO-based FL resilience against such attacks \cite{blanchard2017machine, yin2018byzantine, alistarh2018byzantine, so2020byzantine}. 

There are also methods that stand out as multi-purpose solutions. \cite{allouah2023fixing} explores a joint defense method against data heterogeneity and Byzantine attacks. \cite{bernstein2018signsgd} uses \emph{SignSGD} to cut down communication load and combat against Byzantine attacks, and \cite{liu2019signsgd} proposed a zeroth-order version of \emph{SignSGD} to further reduce the memory overhead in distributed learning systems. \cite{jin2024sign} uses \emph{signSGD} to counter data heterogeneity and Byzantine attacks in distributed learning. However, the communication overhead remains dependent on the dimension of the model and results in prohibitively high communication loads.  Notably, \cite{lang2023compressed, lang2023cpa} introduces a Byzantine resilient compressed aggregation method for FO-based FL systems where the communication overhead is reduced to $1$ bit per step using nested lattice coding with strict privacy guarantees, demonstrating that well-designed lossy compression can induce strong robustness without obviously compromising the performance. 

However, we notice that most efforts addressing this issue separately do \textbf{accurate} gradient estimation followed by \textbf{lossy} compression, leading to potentially unnecessary computational loads, as the compression eventually negates the costly effort of acquiring an accurate gradient estimation in FO-based methods. Motivated by this, we try to find a framework that runs on gradient estimation that is less accurate but attainable and communicable with much lower overheads, with marginal performance loss. The design will mark a difference in rationale between our work and conventional methods addressing data heterogeneity and Byzantine attacks by compression.

\section{\emph{FeedSign}: From Seed-projection Pairs to Seed-sign Pairs}

\label{sec:FeedAvg}

\subsection{Algorithm Design for Seed-sign Pairs}\label{sec:RGE}

\label{subsec:SPSA}

For transformer-based large models, \textit{training} using gradient-based methods usually takes up $12$ times of the memory that is required by \textit{inference} \cite{malladi2023fine}. The excessive demand for memory is due to complex operations of gradient backpropagation \cite{rumelhart1986learning}. One effective method of depriving the extra demand is using backpropagation-free optimizers as is in \emph{FwdLLM} \cite{xu2024fwdllm} and \emph{FedKSeed} \cite{qin2023federated}. The methods proposed by these two works will be referred to as \emph{ZO-FedSGD} for convenience. A brief description of the ZO-based FL is as Algorithm \ref{alg:feedavg}. Proof and descriptions of symbols can be found in the Appendix.

\color{black} %
\begin{definition}[Client Update]\label{def:spsa}
Consider a batch $\mathcal{B}$ from the dataset $\mathcal{D}$, a DL model whose parameter vector is $\boldsymbol{w} \in \mathbb{R}^d$, and a loss function $\mathcal{L}$, the applied ZO gradient estimator SPSA (Simultaneous Perturbation Stochastic Approximation) estimates the gradient as \begin{align}
    p = \frac{1}{n}\sum_{i=1}^n \boldsymbol{z}_i \frac{\mathcal{L}(\boldsymbol{w} + \mu \boldsymbol{z}_i, \mathcal{B}) - \mathcal{L}(\boldsymbol{w} - \mu \boldsymbol{z}_i, \mathcal{B})}{2 \mu},
\end{align}
where $\boldsymbol{z} \sim \mathcal{N}(\boldsymbol{0}, \boldsymbol{I}_d)$ is a Gaussian vector and $\mu$ is the perturbation scale and $p$ is the gradient projection.
\end{definition}

Given Definition \ref{def:spsa}, following normal choices of $n=1$, we apply a different update rule for \emph{FeedSign} elaborated as \begin{equation}
    \begin{aligned}\label{eq:projection}
    (\textbf{\emph{ZO-FedSGD}}) &\quad \hat{\nabla}_{\boldsymbol{w}}\mathcal{L}(\boldsymbol{w}, \mathcal{B}) = p \boldsymbol{z}; \\
    (\textbf{\emph{FeedSign}}) &\quad \hat{\nabla}_{\boldsymbol{w}}\mathcal{L}(\boldsymbol{w}, \mathcal{B}) = \text{Sign}(p) \boldsymbol{z}.
\end{aligned}
\end{equation}

The gradient estimate generated by SPSA can be broken into two parts: the random vector $\boldsymbol{z}$ and its corresponding gradient projection $p$. As a result, the PS needs only the paired seed and projection, or the {\bf seed-projection pairs}. Different from \cite{lang2023compressed, lang2023cpa}, the shared PRNG  directly spawns the random vector $\boldsymbol{z}$ after which the devices scale it by $p$ to perfectly reconstruct the gradient estimation, which is done as follows:


\begin{definition}[Update Aggregation]
    The global model of FL updates with learning rate $\eta$ under the following rule: \begin{align}\label{eq:update}
        (\textit{at client}) \quad \boldsymbol{w} \gets \boldsymbol{w} -  f(p_1, \dots, p_k) \eta\boldsymbol{z},
    \end{align}
    where \begin{equation}\begin{aligned}\label{eq:server_do}
        (\textbf{\emph{ZO-FedSGD}}) &\quad f(p_1, \dots, p_k) = \frac{1}{K}\sum_{k=1}^K p_k; \\ (\textbf{\emph{FeedSign}}) &\quad f(p_1, \dots, p_k) = \text{Sign}\left(\sum_{k=1}^K \frac{p_k}{|p_k|}\right)
    \end{aligned}\end{equation} with $K$ participating clients.
\end{definition}

\begin{algorithm}[t]
\caption{MeZO-based FL}\label{alg:feedavg}
\begin{algorithmic}
   \STATE {\bf Input:} Initialized model parameters $\boldsymbol{w}_0 \in \mathbb{R}^d$, loss function $\mathcal{L}: \mathbb{R}^d \to \mathbb{R}$, step budget $T$, client index set $k \in \mathcal{K} = \{1, \dots, K\}$, collections of client datasets $\{\mathcal{D}_k\}_{k \in \mathcal{K}}$, perturbation scale $\mu$, learning rate $\eta$
   \STATE {\bf Output:} Trained model parameters $\boldsymbol{w}_T$
   \STATE Clients initialize model to $\boldsymbol{w}_{0}$\;
   \FOR {$t = 1, \dots, T$}
   \STATE PS \textbf{broadcasts seed} $s_t$ \COMMENT{only \textit{\textbf{FeedSign}}}
   \FOR {$k = 1, \dots, K$} 
   \STATE \textit{\# Clients do in parallel}
   \STATE Client update local model according to Equation \ref{eq:update} if receives a \textbf{projection broadcast}\;
   \STATE Client sample seed $s_{t,k}$ \COMMENT{only \textit{\textbf{ZO-FedSGD}}}\;
   \STATE Client set PRNG seed to $s_t$ \COMMENT{only \textit{\textbf{FeedSign}}}
   \STATE Client compute $p_k$ according to Equation \ref{eq:projection}\;
   \STATE Client send $p_k$ to PS \;
   \STATE Client send $s_{t,k}$ to PS \COMMENT{only \textit{\textbf{ZO-FedSGD}}}
   \ENDFOR
   \STATE PS collects $p_1, \dots, p_k$ calculate projection $f(p_1, \dots, p_k)$ according to Equation \ref{eq:server_do}\;
   \STATE PS \textbf{broadcasts projection} $f(p_1, \dots, p_k)$\;
   \ENDFOR
\end{algorithmic}
\end{algorithm}

\begin{remark}
    By discarding the amplitude of the gradient estimates, we push the concept of {\bf seed-projection pairs} to {\bf seed-sign pairs}. Different from \emph{ZO-FedSGD}, \emph{FeedSign} always assumes that all clients perturb their model in the same direction for gradient estimation in an aggregation step. Also, \emph{FeedSign} left the sampling of random seeds to the PS since otherwise, the gradient noise would be too large. The PS uses a majority vote to determine whether the model should {\bf march or retreat} a step of fixed size along the designated direction, allowing a $\boldsymbol{1}$\textbf{-bit} per step communication overhead for \emph{FeedSign}.
\end{remark}

As a result, a comparison of communication overhead between \emph{FeedSign} and the baseline \textit{ZO-FedSGD} is as follows, assuming that only one random seed is explored per step.
\begin{equation}
    \begin{aligned}
    \hspace{-10pt}&(\textbf{\textit{ZO-FedSGD}}) \\
    &\quad 1 \times (\underbrace{32}_{\text{float gradient projection}} + \underbrace{32}_{\text{long integer random seed}}) &= 64 \text{ bits},  \\
    \hspace{-10pt}&(\textbf{\textit{FeedSign}}) \\ 
    &\quad 1 \times \underbrace{1}_{\text{float number as gradient projection}} &= 1 \text{ bit}.
\end{aligned}
\end{equation}

\subsection{Convergence Analysis}

Some well-adopted assumptions are needed to facilitate the convergence analysis.

\begin{assumption}[$L$-smooth, \cite{bottou2018optimization}]\label{ass:l-smooth}
    There is a positive constant $L$ for $\mathcal{L}(\boldsymbol{w})$ that satisfies \begin{equation}\begin{aligned}
        \mathcal{L}(\boldsymbol{w}_{t+1}) \leq \mathcal{L}(\boldsymbol{w}_t) &+ \langle\nabla\mathcal{L}(\boldsymbol{w}_t), \boldsymbol{w}_{t+1} - \boldsymbol{w}_t\rangle \\
        &+ \frac{L}{2} \|\boldsymbol{w}_{t+1} - \boldsymbol{w}_t\|_2^2.
    \end{aligned}\end{equation}
\end{assumption}

\Cref{ass:l-smooth} guarantees that the gradient of the loss
function would not change arbitrarily quickly concerning a
varying model parameter. It is notably an essential assumption for convergence analysis of gradient descent methods.

\begin{assumption}[Local $r$-Effective Rank, \cite{malladi2023fine}, Assumption 1]\label{ass:r-effective}
    There is a matrix $\boldsymbol{H}(\boldsymbol{w}_t) \preceq L \boldsymbol{I}_d$ such that with $G(\boldsymbol{w}_t) = \max_{(\boldsymbol{x}, \boldsymbol{y}) \in \mathcal{D}} \|\nabla \mathcal{L}(\boldsymbol{w}_t, (\boldsymbol{x}, \boldsymbol{y}))\|_2$, \begin{enumerate}
        \item For all $\boldsymbol{w}$ such that $\|\boldsymbol{w} - \boldsymbol{w}_t\| \leq \eta d G(\boldsymbol{w}_t)$, we have $\nabla^2 \mathcal{L}(\boldsymbol{w}) \preceq \boldsymbol{H}(\boldsymbol{w}_t)$.

        \item The effective rank of $\boldsymbol{H}(\boldsymbol{w}_t)$, \textit{i.e.}, $\text{tr}(\boldsymbol{H}(\boldsymbol{w})) / \|\boldsymbol{H}(\boldsymbol{w}_t)\|_{\text{op}}$, is at most $r$.
    \end{enumerate}
\end{assumption}

\Cref{ass:r-effective} is key to the following analysis by indicating
that the bulk of the spectrum concentrates around 0 with only
a small number of outliers. In other words, we assume that
there is a very low dimensional but critical model parameter
subspace in fine-tuning large models.

\begin{assumption}[Unbiased Gradient Estimator with Bounded Data Heterogeneity]\label{ass:unbiased}
    The gradient estimator in Definition \ref{def:spsa} is unbiased, specifically, \begin{align}
        &\mathbb{E}_{\mathcal{B}}[\hat{\nabla} \mathcal{L}_k(\boldsymbol{w}, \mathcal{B})] = \nabla \mathcal{L}_k(\boldsymbol{w}), \\
        &\mathbb{E}_{\mathcal{B}}\left[\|\hat{\nabla} \mathcal{L}_k(\boldsymbol{w}, \mathcal{B})\|_2^2\right] \leq c_g \|\nabla \mathcal{L}_k(\boldsymbol{w})\|_2^2 + \frac{\sigma_g^2}{KB} \mathbb{V}[\nabla \mathcal{L}(\boldsymbol{w})], \\
        &\mathbb{E}_k\left[\|\nabla \mathcal{L}_k(\boldsymbol{w}) - \nabla \mathcal{L}(\boldsymbol{w})\|_2^2\right]\leq c_h \|\nabla \mathcal{L}(\boldsymbol{w})\|_2^2 + \sigma_h^2.
    \end{align}
\end{assumption}

\Cref{ass:unbiased} characterizes how batch gradient estimation
is impacted by batch sampling and data heterogeneity. Setting
$c_g = 1$ and $\sigma_g = 0$ leads to an ideal case where batch gradients
are identical to the true client gradient, and setting $c_h = 0$ and
$\sigma_h = 0$ reduces the problem to an i.i.d. setting.

\begin{assumption}[Polyak-\L ojaciewicz Inequality]\label{ass:PL}
    Assume $\mathcal{L}^* := \min_{\boldsymbol{w} \in \mathbb{R}^d} \mathcal{L}(\boldsymbol{w}) > -\infty$, then there is a constant $\delta > 0$ such that for any $\boldsymbol{w} \in \mathbb{R}^d$, $\mathcal{L}(\boldsymbol{w})$ satisfies \begin{equation}\begin{aligned}
        \|\nabla \mathcal{L}(\boldsymbol{w})\|_2^2 &\geq 2\delta (\mathcal{L}(\boldsymbol{w}) - \mathcal{L}^*).
    \end{aligned}\end{equation}
\end{assumption}

\Cref{ass:PL} is a more general assumption than strong convexity \cite{polyak1964gradient, karimi2016linear} but ensuring the loss landscape is amenable to analysis. \emph{MeZO} \cite{malladi2023fine} added another assumption \begin{align}
    \mathbb{V}[\nabla \mathcal{L} (\boldsymbol{w})] \leq 2 \alpha (\mathcal{L}(\boldsymbol{w}) - \mathcal{L}^*), 
\end{align} for a constant $\alpha > 0$ on gradient variance so that the impact on the optimization trajectory of noisy gradient estimations is controllable by a scaled optimization gap.

\begin{assumption}[Sign-Reversing Probability]\label{ass:srp}
    The gradient projection estimate has a reversed sign compared to the true gradient projection with probability upper bounded by $p_{t,e}$. Specifically, given any $\boldsymbol{z}$, it satisfies\begin{align}
        \text{Prob}_{\mathcal{B}}\left[(\boldsymbol{z}^\top \nabla \mathcal{L}(\boldsymbol{w}, \mathcal{B}))(\boldsymbol{z}^\top \nabla \mathcal{L}(\boldsymbol{w})) < 0\right] \leq p_{t, e}.
    \end{align}
\end{assumption}

\Cref{ass:srp} is a loose assumption in terms of batch gradient projection estimation that does not consider the magnitude of projection but only its sign. We conclude that $p_{t,e} < 1/2$ under a mild assumption as discussed in the Appendix. Note that $p_{t,e}$ is data heterogeneity-independent, implying \emph{FeedSign}'s data heterogeneity robustness. If there were no Byzantine client, the overall sign-reversing probability $p_t = p_{t,e}$.

With the assumptions, the following lemma characterizes the stepwise loss descent bound for \emph{ZO-FedSGD}.

\begin{lemma}[Dimension-free Descent Lemma for ZO-FedSGD, \cite{malladi2023fine}] Given $\mathcal{L}(\boldsymbol{w})$ being a $L$-smooth function and $\hat{\nabla} \mathcal{L}(\boldsymbol{w}, \mathcal{B})$ an unbiased gradient estimator with $\mu \to 0$, the expected per-step loss descent can be bounded as follows:
    \begin{align}
        \mathbb{E}\left[\mathcal{L}(\boldsymbol{w}_{t+1})\right] \leq &\mathcal{L}(\boldsymbol{w}_t) - \eta \|\nabla \mathcal{L}(\boldsymbol{w}_t)\|_2^2  \notag \\
        &+ \frac{L \zeta \eta^2}{2} \mathbb{E}_{k, \mathcal{B}}\left[\|\nabla\mathcal{L}_k(\boldsymbol{w}_t, \mathcal{B})\|_2^2\right].
    \end{align}
    where \begin{align}
        \zeta = \frac{dr + d - 2}{n (d + 2)} + 1
    \end{align} characterize the low-rank effect of the \textbf{gradient estimator}.
    \label{lm:descent}
\end{lemma}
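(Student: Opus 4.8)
The plan is to sharpen the crude $L$-smoothness inequality into a curvature-aware one, so that the effective rank $r$ of \Cref{ass:r-effective}, rather than the ambient dimension $d$, governs the second-order term; this is the whole reason the bound ends up dimension-free. First I would invoke the local Hessian control. Using perturbations normalized so that $\|\boldsymbol{z}\|^2 = d$ (spherical smoothing), the step magnitude obeys $\|\boldsymbol{w}_{t+1} - \boldsymbol{w}_t\| = \eta|\boldsymbol{z}^\top \nabla\mathcal{L}_k|\,\|\boldsymbol{z}\| \le \eta d\, G(\boldsymbol{w}_t)$, which is exactly the neighborhood on which \Cref{ass:r-effective} guarantees $\nabla^2\mathcal{L} \preceq \boldsymbol{H}(\boldsymbol{w}_t)$. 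Taylor's theorem with the integral form of the remainder, upper-bounding the averaged Hessian along the segment by $\boldsymbol{H}(\boldsymbol{w}_t)$, then yields
\begin{equation}
\mathcal{L}(\boldsymbol{w}_{t+1}) \le \mathcal{L}(\boldsymbol{w}_t) + \langle \nabla\mathcal{L}(\boldsymbol{w}_t),\, \boldsymbol{w}_{t+1} - \boldsymbol{w}_t\rangle + \frac{1}{2}(\boldsymbol{w}_{t+1} - \boldsymbol{w}_t)^\top \boldsymbol{H}(\boldsymbol{w}_t)(\boldsymbol{w}_{t+1} - \boldsymbol{w}_t),
\end{equation}
which is tighter than plain $L$-smoothness and is the step that ultimately trades $d$ for $r$.

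Next I would substitute $\boldsymbol{w}_{t+1} - \boldsymbol{w}_t = -\eta\,\hat{\nabla}\mathcal{L}(\boldsymbol{w}_t, \mathcal{B})$ and take expectation over the perturbations, the minibatches, and the client index. The linear term collapses to $-\eta\|\nabla\mathcal{L}(\boldsymbol{w}_t)\|_2^2$ directly from the unbiasedness in \Cref{ass:unbiased}. All the difficulty then sits in the quadratic term $\frac{\eta^2}{2}\mathbb{E}[\hat{\nabla}^\top \boldsymbol{H}\hat{\nabla}]$. Writing $g = \nabla\mathcal{L}_k(\boldsymbol{w}_t, \mathcal{B})$ and $\hat{\nabla} = \frac{1}{n}\sum_{i=1}^n (\boldsymbol{z}_i^\top g)\boldsymbol{z}_i$ in the $\mu\to 0$ limit, I would expand the fourth moment with the identity $\mathbb{E}[z_i z_j z_k z_l] = \frac{d}{d+2}(\delta_{ij}\delta_{kl} + \delta_{ik}\delta_{jl} + \delta_{il}\delta_{jk})$, which holds for $\|\boldsymbol{z}\|^2 = d$. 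Separating diagonal ($i=j$) from cross ($i\ne j$) contributions and using independence of the $\boldsymbol{z}_i$ gives
\begin{equation}
\mathbb{E}\left[\hat{\nabla}^\top \boldsymbol{H}\hat{\nabla}\right] = \frac{1}{n}\cdot\frac{d}{d+2}\left(\|g\|_2^2\,\text{tr}(\boldsymbol{H}) + 2\, g^\top \boldsymbol{H} g\right) + \frac{n-1}{n}\, g^\top \boldsymbol{H} g.
\end{equation}

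The final step applies the two structural facts of \Cref{ass:r-effective}: $\text{tr}(\boldsymbol{H}) \le r\|\boldsymbol{H}\|_{\text{op}} \le rL$ from the effective-rank clause, and $g^\top \boldsymbol{H} g \le L\|g\|_2^2$ from $\boldsymbol{H} \preceq L\boldsymbol{I}_d$. Substituting and collecting the coefficient of $L\|g\|_2^2$ produces $\frac{d(r+2)}{n(d+2)} + \frac{n-1}{n} = \frac{dr + d - 2}{n(d+2)} + 1 = \zeta$, and taking the outer expectation $\mathbb{E}_{k,\mathcal{B}}$ turns $\|g\|_2^2$ into $\mathbb{E}_{k,\mathcal{B}}[\|\nabla\mathcal{L}_k(\boldsymbol{w}_t, \mathcal{B})\|_2^2]$, assembling the claim. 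I expect the main obstacle to be the bookkeeping in the fourth-moment expansion, in particular carrying the spherical $\frac{d}{d+2}$ factor and correctly isolating the diagonal from the cross terms for general $n$, since this is exactly what pins down the constant $\zeta$; a secondary subtlety is verifying that the update never leaves the neighborhood of \Cref{ass:r-effective}, which is what licenses replacing $L\boldsymbol{I}_d$ by the low-rank $\boldsymbol{H}(\boldsymbol{w}_t)$ and thereby makes the descent dimension-free rather than scaling with the full $d$.
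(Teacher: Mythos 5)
Your proof is correct and follows essentially the same route as the argument in \cite{malladi2023fine}, which is where this paper imports \Cref{lm:descent} from (the paper states it with a citation and reproduces no proof): a Taylor bound against the local Hessian $\boldsymbol{H}(\boldsymbol{w}_t)$ of \Cref{ass:r-effective} rather than $L\boldsymbol{I}_d$, a Wick-type fourth-moment expansion of the SPSA estimator separating diagonal from cross terms, and the bounds $\text{tr}(\boldsymbol{H}) \leq rL$ and $\boldsymbol{H} \preceq L \boldsymbol{I}_d$, with your final algebra $\frac{d(r+2)}{n(d+2)} + \frac{n-1}{n} = \frac{dr+d-2}{n(d+2)} + 1 = \zeta$ checking out exactly. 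The one point worth flagging is that your spherical normalization $\|\boldsymbol{z}\|^2 = d$ is precisely what yields the $\frac{d}{d+2}$ factor and hence the stated $\zeta$, whereas \Cref{def:spsa} samples $\boldsymbol{z} \sim \mathcal{N}(\boldsymbol{0}, \boldsymbol{I}_d)$ (which would instead give $1 + \frac{r+1}{n}$); this mismatch is inherited from the cited source, whose theoretical analysis likewise switches to the sphere.
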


\begin{remark}
    Lemma \ref{lm:descent} is the premise of successful ZO-based fine-tuning of large models. It can be observed that there is only an additional term in the quadratic term compared to that of the FO. It is the previous sense that SPSA-like algorithms result in a $\mathcal{O}(d)$ times larger gradient variance compared to FO methods \cite{nemirovskij1983problem, spall1992multivariate, jamieson2012query, oktay2020randomized}, prohibiting successful training of large models. However, Lemma \ref{lm:descent} refined the bound and found that the gradient variance can be controlled by $\mathcal{O}(r)$, where $r$ is a loss landscape-related parameter known as local effective rank. The parameter $r$ is found to be small compared to the model size $d$ in well-trained DL models 
    \color{black} %
    as reported in \cite{papyan2020traces, ghorbani2019investigation, yao2020pyhessian, sagun2017empirical, wu2020dissecting}.\color{black}
\end{remark}

We present a theorem characterizing the convergence of \emph{FeedSign} with two baselines for comparisons. 

\begin{theorem}[Global Convergence for \emph{FedSGD}, \emph{ZO-FedSGD}, and \emph{FeedSign}] Given all assumption including Assumptions \ref{ass:l-smooth}-\ref{ass:srp} satisfied, with corresponding conditions met, after
    \begin{align}
        t = A \log \frac{\mathcal{L}(\boldsymbol{w}_0) - \mathcal{L}^* - \tilde{C}}{\epsilon}
    \end{align}
    steps, we will have the gap between the expected loss $\mathbb{E}[\mathcal{L}(\boldsymbol{w}_t)]$ and its lowest obtainable loss value $\mathcal{L}^* + \tilde{C}$ smaller than $\epsilon$ with \begin{align}
        (\textbf{\emph{FedSGD}}) & \hspace{3pt} A = 2 \delta \eta - L \delta \eta^2 c_g (1 + c_h) - \frac{L \alpha \sigma_g^2 \eta^2}{KB}, \notag \\
        & \hspace{3pt} C = \frac{L c_g \sigma_h^2 \eta^2}{2}; \\
        (\textbf{\emph{ZO-FedSGD}}) & \hspace{3pt} A = 2 \delta \eta - L \zeta \delta \eta^2 c_g (1 + c_h) - \frac{L \zeta \alpha \sigma_g^2 \eta^2}{KB}, \notag \\
        & \hspace{3pt} C = \frac{L \zeta c_g \sigma_h^2 \eta^2}{2}; \\
        (\textbf{\emph{FeedSign}}) & \hspace{3pt} A = 2\sqrt{\frac{2}{\pi}}\delta\eta^2(1 - 2 \max_{t} p_t), \notag \\
        & \hspace{3pt} C = \frac{L r \eta^2}{2}, 
    \end{align}
    where $\tilde{C} = C / A$ captures the error floor with $0 < A < 1$ and $C > 0$, and $\zeta$ is a factor related to the low effective rank property of the pre-trained model.
    \label{thm:1}
\end{theorem}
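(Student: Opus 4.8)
The plan is to establish, for each of the three algorithms, a one-step contraction of the expected optimality gap of the form $\mathbb{E}[\mathcal{L}(\boldsymbol{w}_{t+1})] - \mathcal{L}^* \leq (1 - A)\,(\mathbb{E}[\mathcal{L}(\boldsymbol{w}_t)] - \mathcal{L}^*) + C$, and then to unroll this recursion. First I would treat \emph{FedSGD} and \emph{ZO-FedSGD} together, since both carry the full gradient magnitude. Starting from the descent bound (the first-order analogue of Lemma~\ref{lm:descent} for \emph{FedSGD}, and Lemma~\ref{lm:descent} itself for \emph{ZO-FedSGD}), I would substitute the second-moment bound from \Cref{ass:unbiased} to replace $\mathbb{E}_{k,\mathcal{B}}\|\nabla\mathcal{L}_k(\boldsymbol{w}_t,\mathcal{B})\|_2^2$ by a term proportional to $\|\nabla\mathcal{L}(\boldsymbol{w}_t)\|_2^2$ (picking up the factors $c_g$, $c_h$) plus a variance/heterogeneity floor involving $\sigma_g^2$ and $\sigma_h^2$. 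Applying the Polyak--\L ojaciewicz inequality (\Cref{ass:PL}) to the leading $-\eta\|\nabla\mathcal{L}\|_2^2$ term and the \emph{MeZO} gradient-variance bound $\mathbb{V}[\nabla\mathcal{L}]\leq 2\alpha(\mathcal{L}-\mathcal{L}^*)$ to the noise term converts everything into the optimality gap, yielding the claimed $A$ and $C$; the only difference between the two is the low-rank factor $\zeta$ multiplying the quadratic terms, with $\zeta\to 1$ recovering \emph{FedSGD}.

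The \emph{FeedSign} case requires a separate, and more delicate, descent bound. Here the update is $\boldsymbol{w}_{t+1}-\boldsymbol{w}_t = -\eta\,g\,\boldsymbol{z}$ with $g=\text{Sign}(\cdot)\in\{-1,+1\}$, so $g^2=1$. I would expand $\mathcal{L}(\boldsymbol{w}_{t+1})$ to second order, but instead of the crude $\tfrac{L}{2}\|\boldsymbol{w}_{t+1}-\boldsymbol{w}_t\|_2^2$ I would use the Hessian bound $\boldsymbol{H}(\boldsymbol{w}_t)$ of \Cref{ass:r-effective}; since $\mathbb{E}_{\boldsymbol{z}}[\boldsymbol{z}^\top\boldsymbol{H}\boldsymbol{z}]=\text{tr}(\boldsymbol{H})\leq r\|\boldsymbol{H}\|_{\text{op}}\leq Lr$, the quadratic term contributes exactly $C=\tfrac{Lr\eta^2}{2}$, which is how the effective rank $r$ rather than the ambient dimension $d$ enters. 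For the linear term $-\eta\,\mathbb{E}[g\,\langle\nabla\mathcal{L}(\boldsymbol{w}_t),\boldsymbol{z}\rangle]$, I would condition on $\boldsymbol{z}$ and split on whether the aggregated sign matches $\text{Sign}(\boldsymbol{z}^\top\nabla\mathcal{L}(\boldsymbol{w}_t))$: by \Cref{ass:srp} the majority-vote mismatch probability is at most $p_t$, so the conditional expectation is at least $(1-2p_t)\,|\boldsymbol{z}^\top\nabla\mathcal{L}(\boldsymbol{w}_t)|$. Taking expectation over the Gaussian $\boldsymbol{z}$ and using $\mathbb{E}_{\boldsymbol{z}}|\boldsymbol{z}^\top\nabla\mathcal{L}|=\sqrt{2/\pi}\,\|\nabla\mathcal{L}\|_2$ gives the linear descent term $-\eta(1-2p_t)\sqrt{2/\pi}\,\|\nabla\mathcal{L}(\boldsymbol{w}_t)\|_2$.

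The key obstacle is that this descent term is first order in $\|\nabla\mathcal{L}\|_2$, whereas PL only controls $\|\nabla\mathcal{L}\|_2^2$; a naive substitution produces a square-root recursion rather than a geometric one. The resolution I would use is the auxiliary step-size condition (the ``corresponding condition'' in the statement) keeping $\eta\|\nabla\mathcal{L}(\boldsymbol{w}_t)\|_2\leq 1$ along the trajectory, so that $\|\nabla\mathcal{L}\|_2\geq \eta\|\nabla\mathcal{L}\|_2^2$; combining this with PL gives $\|\nabla\mathcal{L}\|_2\geq 2\delta\eta(\mathcal{L}(\boldsymbol{w}_t)-\mathcal{L}^*)$ and hence the contraction factor $A=2\sqrt{2/\pi}\,\delta\eta^2(1-2\max_t p_t)$, where the maximum over $p_t$ makes $A$ uniform in $t$. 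Note that $p_t$ absorbs both batch-sampling sign error and any Byzantine flips through the majority vote, so as long as $p_t<1/2$ we retain $A>0$; this is precisely where data-heterogeneity and Byzantine robustness surface. Finally, for all three methods I would rewrite the recursion as $\mathbb{E}[\mathcal{L}(\boldsymbol{w}_{t+1})]-\mathcal{L}^*-\tilde{C}\leq(1-A)(\mathbb{E}[\mathcal{L}(\boldsymbol{w}_t)]-\mathcal{L}^*-\tilde{C})$ with fixed point $\tilde{C}=C/A$, iterate to obtain $\mathbb{E}[\mathcal{L}(\boldsymbol{w}_t)]-\mathcal{L}^*-\tilde{C}\leq(1-A)^t(\mathcal{L}(\boldsymbol{w}_0)-\mathcal{L}^*-\tilde{C})$, and bound $(1-A)^t\leq e^{-At}$ to read off the number of steps needed to reach accuracy $\epsilon$, giving the stated logarithmic step count and the exponential $\mathcal{O}(e^{-t})$ rate.
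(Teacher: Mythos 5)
Your proposal matches the paper's own proof essentially step for step: the same descent lemmas (the FO bound and \Cref{lm:descent} with the $\zeta$ factor for \emph{FedSGD}/\emph{ZO-FedSGD}, and the \emph{FeedSign} descent bound built from the Hessian trace $\mathrm{tr}(\boldsymbol{H}(\boldsymbol{w}_t))\leq Lr$ plus the half-normal mean $\sqrt{2/\pi}\,\|\nabla\mathcal{L}(\boldsymbol{w}_t)\|_2$ with the $(1-2p_t)$ sign-error factor from \Cref{ass:srp}), the same substitution of \Cref{ass:unbiased}, the PL inequality, and the \emph{MeZO} variance bound, the same step-size condition $\eta\leq 1/\|\nabla\mathcal{L}(\boldsymbol{w}_t)\|$ to convert the first-order norm term into a quadratic one, and the same contraction with fixed point $\tilde{C}=C/A$ unrolled geometrically. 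The argument is correct and takes no genuinely different route from the paper's.
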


\Cref{thm:1} offers several interesting observations on the behavior of \emph{FeedSign} and \emph{ZO-FedSGD}.

\begin{remark}
    \textbf{Convergence Rate Comparison.}\label{rmk:crc}
    \Cref{thm:1} above shows that under a \emph{FedSGD}-style setting, both \emph{ZO-FedSGD} and \emph{FeedSign} converges at an exponential rate $\mathcal{O}(e^{-t})$, the same rate as in FO methods can attain in big $\mathcal{O}$ notation. 
    \color{black} %
    Notably, \emph{ZO-FedSGD} differs from \emph{FedSGD} to only a term characterizing the low-rank property of the pre-trained model $\zeta \sim \mathcal{O}(r)$. \color{black}
\end{remark}

\color{black} %
\begin{remark}
    \textbf{Data Heterogeneity Resilience.} 
    It is observed that the error floor of \emph{ZO-FedSGD} scales to the data heterogeneity parameters $c_g$ and $\sigma_h$ while that of \emph{FeedSign} is independent of them. As a result, under an ideal iid case, the error floor vanishes with $\sigma_h = 0$ and $c_g \ll \infty$, but grows under high data heterogeneity. Contrarily, the error floor of \emph{FeedSign} is connected to the learning rate, which can be suppressed by choosing a sufficiently small learning rate. In summary, we trade for more resilience against data heterogeneity at the cost of having a fixed but small error floor in \emph{FeedSign}.
\end{remark}
\color{black}

\begin{remark}
    \textbf{Byzantine Resilience.} 
    Nevertheless, $p_t$ is a key factor influencing the performance of \emph{FeedSign}. It is noticed that for \emph{ZO-FedSGD} and \emph{FeedSign}, any attacks altering gradient estimation boils down to altering the gradient projection due to the deterministic nature of PRNG. While in \emph{ZO-FedSGD}, clients have some degree of freedom to enact their strategies of attack, hence being more unpredictable, 
    \color{black} %
    the most effective method of damaging convergence of FFT due to the binary voting scheme in \emph{FeedSign} is to always send a reversed sign to PS. 
    A succinct analytic characterization to its impact can be found in the Appendix.
\end{remark}
\color{black}

\section{Experiments}

To validate the effectiveness of the proposed approach, we conducted extensive experiments across different tasks, data heterogeneity levels, and models of various types and sizes. 

\begin{table}[h]
\caption{Brief descriptions on the baselines and \emph{FeedSign}.}
\label{table:desc}
\centering
\begin{tabular}{lp{40pt}p{40pt}p{40pt}}
\hline
Method                            & Forward only?             & Federated?                & Stepwise comm. load \\ \hline
FO                                & $\times$     & $\times$     & N/A                 \\
MeZO                              & $\checkmark$ & $\times$     & N/A                 \\
\emph{ZO-FedSGD} & $\checkmark$ & $\checkmark$ & $64$ bit            \\
\emph{FeedSign}  & $\checkmark$ & $\checkmark$ & $1$ bit             \\ \hline
\end{tabular}
\end{table}

\textbf{Baselines.} To ensure consistency with previous research, we evaluate the same models used in \emph{MeZO}. As shown in \Cref{table:desc}, we compare our method with standard FO methods (use backpropagation, takes up at least $6$ times of memory), centralized ZO method \emph{MeZO} \cite{malladi2023fine} and \emph{ZO-FedSGD} \cite{xu2024fwdllm, qin2023federated}. 
We kept the number of total perturbations consistent with that adopted in \emph{MeZO}. As a result, the number of elapsed steps of \emph{MeZO} is $K$ times that of \emph{ZO-FedSGD} and \emph{FeedSign}.
We run both algorithms for the same number of steps, so the total communication overhead of \emph{FeedSign} is $1 / 64$ of that of \emph{ZO-FedSGD}. We make $5$ repetitive runs with different seeds series and append the standard deviation value in brackets.

\begin{table*}[t]
\setlength{\tabcolsep}{2pt}
\centering
    \caption{Main results on OPT-13B over language tasks. The highest metric obtained using federated ZO optimization is \textbf{bolded}, and the metric gap to that of the FO method is reported in the rightmost column.}
    \label{table:nlp_opt_results}
    \centering
    \begin{center}
        \begin{tabular}{lcccccccccccc}
            \toprule
            \multicolumn{1}{c}{Task}  &\multicolumn{1}{c}{\bf SST-2}  &\multicolumn{1}{c}{\bf RTE}  &\multicolumn{1}{c}{\bf CB}  &\multicolumn{1}{c}{\bf BoolQ} &\multicolumn{1}{c}{\bf WSC}  &\multicolumn{1}{c}{\bf WIC} &\multicolumn{1}{c}{\bf MultiRC} &\multicolumn{1}{c}{\bf COPA}  &\multicolumn{1}{c}{\bf ReCoRD} &\multicolumn{1}{c}{\bf SQuAD}  &\multicolumn{1}{c}{\bf DROP} & \multirow{2}*{Gap} \\ 
            \multicolumn{1}{c}{Type} & \multicolumn{7}{c}{----------------------- classification -----------------------} & \multicolumn{2}{c}{-- multiple choice --} & \multicolumn{2}{c}{--- generation ---} \\
            \hline
            Zero-shot & 58.8 & 59.6 & 46.4 & 59.0 &	38.5 & 55.0	& 46.9 & 80.0& 81.2& 46.2 & 14.6 & -- \\
            \hline
            FO & 92.0 & 70.8 & 83.9 & 77.1 & 63.5 &	70.1 & 71.1 &79.0 & 74.1 & 84.9 & 31.3 & -- \\
            \hdashline
            MeZO & 91.4 & 66.1 & 67.9 & 67.6 & 63.5 & 61.1 & 60.1 & 88.0 & 81.7 & 84.7 & 30.9 & -3.1 \\
            \hdashline
            \emph{ZO-FedSGD} & 85.9&	60.5&	\bf 68.1&	65.4&	52.8&	54.1&	56.1&	85.0&	80.4&	76.3&	\bf 28.9&	-7.6 \\
            & (0.7)&	(1.7)&	(2.6)&	(2.0)&	(9.7)&	(1.2)&	(2.6)&	(1.0)&	(1.0)&	(0.5)&	(0.5) & -- \\
            \bf \emph{FeedSign} & \bf 87.3&	\bf 61.0&	67.8&	\bf 66.6&	\bf 55.5&	\bf 56.0&	\bf 57.4&	\bf 87.8&	\bf 81.3&	\bf 78.0&	28.4&	\bf -6.4 \\
            & (0.5)&	(1.5)&	(0.0)&	(1.5)&	(10.6)&	(1.0)&	(0.7)&	(0.4)&	(0.9)&	(0.7)&	(0.3) & -- \\
            \bottomrule
        \end{tabular}
    \end{center}
\end{table*}

\subsection{Main Results in General Settings}

\textbf{Language models.} We report the results on OPT-13B in \Cref{table:nlp_opt_results}. It can be observed that the mean metric gap from centralized FO to \emph{FeedSign} is $-6.1\%$ across $11$ different tasks. Moreover, despite discarding the amplitude information, the test metrics are higher for most instances compared to \emph{ZO-FedSGD}. The improvement could be attributed to the overfitting prevention effect of featuring a noisy gradient estimator \cite{welling2011bayesian, bernstein2018signsgd}.


\begin{table}[t!]
\centering
\caption{Results on ViT-large FFT, client pool size $K=5$.}
\begin{tabular}{@{}lcc@{}}
\toprule
Dataset   & \bf CIFAR-10 & \bf CIFAR-100 \\ \midrule
ZO-trained SOTA & 86.5 & 34.2 \\ \midrule
\emph{FeedSign}  & \bf 91.9 (5.9)    & \bf 45.3 (5.0)     \\ \bottomrule
\end{tabular}\label{table:vision models}
\vspace{-10pt}
\end{table}

\textbf{Vision models.} \Cref{table:vision models} reports the test accuracy of \emph{ZO-FedSGD} and \emph{FeedSign} on CIFAR-10 and CIFAR-100. We download a pre-trained model checkpoint, replace the classifier layer with a randomly initialized layer, and fine-tune only the last layer. It is shown that \emph{FeedSign} attains a test accuracy of $91.9 \%$ in only $2 \times 10 ^ 4$ steps with the support of a pre-trained model, higher than the ZO from-the-scratch training SOTA \cite{chen2023deepzero, zhang2024foresight} to the best of our knowledge with a much lesser number of steps. 

\subsection{Data Heterogeneity Resilience}\label{chap:exp_heterogeneity}

\textbf{Settings.} A common approach to generating heterogeneous shards of a dataset is to have the number of samples from a class $c$ being proportional to $p_c \sim \text{Dirichlet}(\beta)$ for a client where $\beta$ is a controlling parameter
\cite{vahidian2023rethinking}. 

\textbf{Language models.} \Cref{table:non_iid} reports the test metric of \emph{ZO-FedSGD} and \emph{FeedSign}. We observe a drop in test metrics across all tasks, confirming FL's vulnerability to data heterogeneity. However, it is clear that \emph{FeedSign} outperforms \emph{ZO-FedSGD} on most of the entries.

\textbf{Vision models.} We conduct FFT on a ResNet-18 checkpoint. 
We observe that although \emph{ZO-FedSGD} outperforms \emph{FeedSign} on iid data, \emph{FeedSign} turns the tide under high data heterogeneity, which affirms the theoretically implied data heterogeneity robustness of \emph{FeedSign}.

\subsection{Byzantine Resilience}

\begin{table*}[t]
    \caption{Main results on OPT-125M over language models with non-iid data. We \textbf{bolded} the higher result within \emph{FeedSign} and \emph{ZO-FedSGD}.}
    \label{table:non_iid}
    \begin{center}
        \begin{tabular}{lccccccc}
            \toprule
            \multicolumn{1}{c}{Task}  &\multicolumn{1}{c}{\bf SST-2}  &\multicolumn{1}{c}{\bf RTE}  &\multicolumn{1}{c}{\bf CB}  &\multicolumn{1}{c}{\bf BoolQ} &\multicolumn{1}{c}{\bf WSC}  &\multicolumn{1}{c}{\bf WIC} &\multicolumn{1}{c}{\bf MultiRC} \\
            \hline
            Zero-shot & 51.2 & 53.0 & 48.2 & 41.5 & 37.5 & 51.2 & 49.7 \\
            \hline
             
            MeZO & 82.6 & 53.1 & 67.4 & 62.7 & 61.7 & 54.4 & 55.8 \\
             & (0.7) &	(2.3)&	(0.8)& (0.7) & (3.4) &	(1.3)&	(1.7) \\
            \hline

            \emph{ZO-FedSGD}, $\beta=1.0$ & 80.6&	51.8&	\bf 65.3&	61.4&	56.8&	\bf 52.1&	54.3 \\
             & (3.3)&	(2.1)&	(2.9)&	(1.1)&	(10.8)&	(1.8)&	(5.1) \\
            \bf \emph{FeedSign}, $\beta=1.0$ & \bf 82.6&	\bf 52.8&	\bf 65.3&	\bf 62.5&	\bf 61.1&	\bf 52.1&	\bf 54.7 \\
             & (1.9)&	(3.0)&	(4.6)&	(1.1)&	(4.7)&	(1.8)&	(5.3) \\
            \bottomrule
        \end{tabular}
    \end{center}
\end{table*}

\begin{table*}[h]
\setlength{\tabcolsep}{2pt}
\centering
    \caption{Main results on OPT-125M over language models with a Byzantine attacker.}
    \label{table:byzantine_robustness}
    \begin{center}
        \begin{tabular}{lccccccccccc}
            \toprule
            \multicolumn{1}{c}{Task}  &\multicolumn{1}{c}{\bf SST-2}  &\multicolumn{1}{c}{\bf RTE}  &\multicolumn{1}{c}{\bf CB}  &\multicolumn{1}{c}{\bf BoolQ} &\multicolumn{1}{c}{\bf WSC}  &\multicolumn{1}{c}{\bf WIC} &\multicolumn{1}{c}{\bf MultiRC} &\multicolumn{1}{c}{\bf COPA}  &\multicolumn{1}{c}{\bf ReCoRD} &\multicolumn{1}{c}{\bf SQuAD}  &\multicolumn{1}{c}{\bf DROP}\\ 
            \multicolumn{1}{c}{Type} & \multicolumn{7}{c}{----------------------- classification -----------------------} & \multicolumn{2}{c}{-- multiple choice --} & \multicolumn{2}{c}{--- generation ---} \\
            \hline
            Zero-shot & 51.2 & 53.0 & 48.2 & 41.5 & 37.5 & 51.2 & 49.7 & 69.0 & 51.7 & 9.5 & 4.4 \\
            \hline
            \emph{ZO-FedSGD} & 79.0&	52.1&	66.3&	\bf 59.8&	48.8&	52.2&	53.9&	64.8&	48.0&	35.8&	12.5 \\
            & (1.7)&	(3.2)&	(3.1)&	(2.5)&	(10.2)&	(0.6)&	(1.5)&	(4.1)&	(2.7)&	(0.9)&	(1.0) \\
            \hdashline
            \bf \emph{FeedSign} & \bf 83.2&	\bf 55.9&	\bf 66.7&	59.7&	\bf 50.5&	\bf 52.8&	\bf 55.5&	\bf 66.4&	\bf 49.7&	\bf 42.8&	\bf 14.2 \\
            & (0.5)&	(1.2)&	(2.6)&	(1.1)&	(7.0)&	(1.2)&	(1.7)&	(1.5)&	(2.1)&	(0.6)&	(0.8) \\
            \bottomrule
        \end{tabular}
    \end{center}
\end{table*}

\begin{figure}[t]
\centering
    \includegraphics[width=0.35\textwidth]{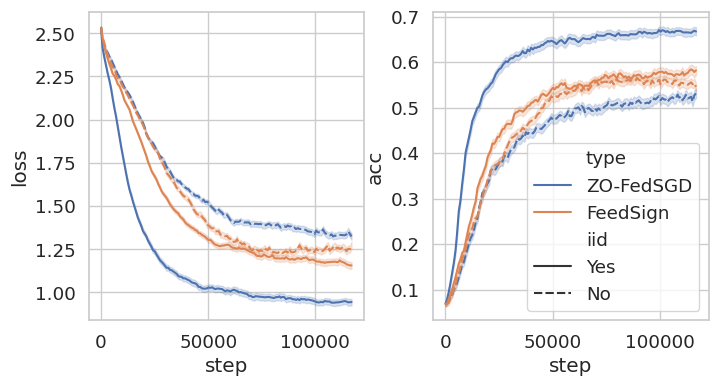}
    \vspace{-10pt}
    \caption{Loss and accuracy curve versus the number of steps elapsed under data heterogeneity.\label{fig:noniid}}
    \vspace{-10pt}
\end{figure}

\begin{figure}[t]
\centering
    \includegraphics[width=0.5\textwidth]{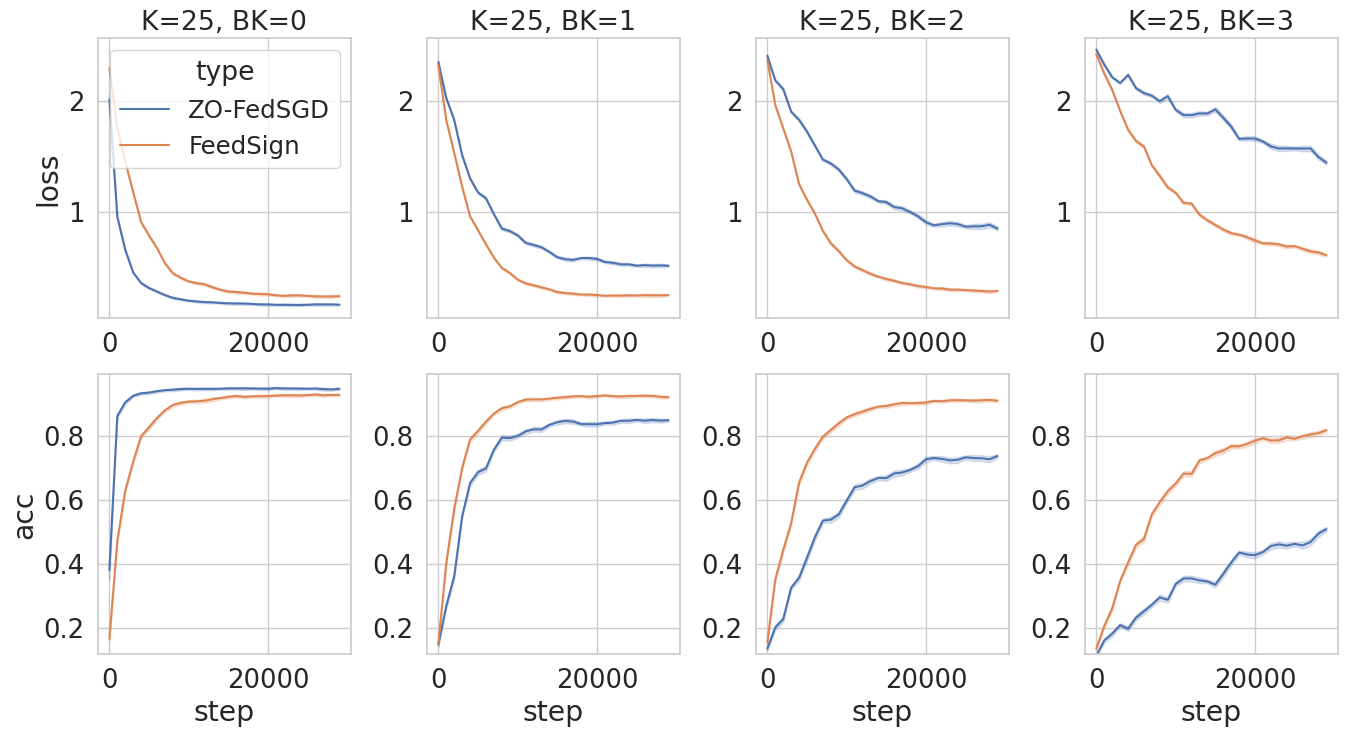}
    \vspace{-10pt}
    \caption{Loss and accuracy curve versus number of steps elapsed under Byzantine attacks, with a bigger client pool size.\label{fig:bztK=25}}
    \vspace{-10pt}
\end{figure}

\textbf{Settings.} We assume that there is $1$ Byzantine client and $4$ honest clients. The Byzantine client always transmits a random number as the gradient projection in \emph{ZO-FedSGD} and always transmits a reversed sign in \emph{FeedSign}. All other settings are consistent.

\textbf{Language models.} \Cref{table:byzantine_robustness} reports the test metric of \emph{ZO-FedSGD} and \emph{FeedSign} with one of the clients as a Byzantine client. The test metric of \emph{FeedSign} is higher than that of \emph{ZO-FedSGD} with the largest gap of $+6.5\%$. It establishes that \emph{FeedSign} expresses an inherent advantage in resisting Byzantine attacks.

\textbf{Image models.} We increase the client pool size to $K = 25$, and report the test loss and accuracy curves with $BK = 0$ to $3$ independent Byzantine clients fine-tuning a ViT-base model on the CIFAR-10 dataset, as in \Cref{fig:bztK=25}. It is observed that when there is no Byzantine attack (leftmost column), \emph{FeedSign} performs closely to \emph{ZO-FedSGD}. When the number of attackers increases, the performance of \emph{ZO-FedSGD} is gradually compromised. In contrast, \emph{FeedSign} maintains a robust performance, and its convergence is not compromised until there are $BK=3$ attackers.

\begin{remark}
    It should be noted that the simulated scenario in which the devices submit falsified gradient projections covers different attack methods, including gradient noise injection and label flipping. Specifically, for gradient noise injection, the gradient direction is determined by only the PRNG and the seed. Hence, the devices can only change the amplitude of the gradient. For label flipping, since the gradient estimate is projected to a determined direction, it is also equivalent to having an inaccurate gradient projection.
\end{remark}


\section{Discussions}

We include a brief discussion on some previously unrevealed features of federated learning systems based on \emph{seed-projection} or \emph{seed-sign pairs}, some of which may be shared by \emph{FeedSign} and \emph{ZO-FedSGD}. A detailed version can be found in the Appendix.

\textbf{Efficient Model Storage and Sharing.} The fine-tuned models can be represented with \emph{orbits}, namely, the elapsed \emph{seed-projection} or \emph{seed-sign} pairs from the starting checkpoint. For example, for a fine-tuned OPT-13B model, $24$GB of additional storage is required. However, the orbit generated by \emph{FeedSign} will occupy less than $200$ bytes of storage and guarantees perfect recovery of the fine-tuned model with $10 000$ fine-tune steps. The technique can largely reduce the load for model hubs like Huggingface.

\textbf{Parameter Servers can be Small and Task Agnostic.} Unlike conventional federated learning systems, both data and models are kept private, which circumvents the usually made but easily ignored assumption: the PS should be big and also a legal holder of the models.

\textbf{Memory-efficient Integration with Differential Privacy.} \emph{FeedSign} and \emph{ZO-FedSGD} can serve as an efficient framework that provides a strong privacy-convergence trade-off for different tasks with a slight modification on the aggregation rule since gradient statistics is not required.

\section{Conclusion}

We have presented a novel FFT framework \emph{FeedSign} that can operate in extremely deficient communication and memory budgets. Facilitated by ZO optimization and shared PRNG, each client needs only to upload one bit to the PS and then download one bit as a global update direction indicator in a step, and the memory overhead is equal to that needed for inference. We conduct theoretical analysis implying that \emph{FeedSign} has different kinds of robustness. Extensive experiments have shown that reducing communication overhead affects little on the performance of \emph{FeedSign} while bringing additional advantages.

\section*{Impact Statement}




This paper presents work whose goal is to reduce the impractically expensive costs of communication, memory, and computation in federated fine-tuning for large models. There are many potential societal consequences of our work, none of which we feel must be specifically highlighted here.

\nocite{langley00}

\bibliography{example_paper}
\bibliographystyle{icml2025}

\newpage
\appendix
\onecolumn

\section{Description of Symbols}

Descriptions of the symbols used in this paper can be found in \Cref{table:desc}.

\begin{table}[h]
    \setlength{\tabcolsep}{5pt}
    \caption{Descriptions of Symbols}
    \label{table:desc}
    \begin{center}
        \begin{tabular}{cl}
            \toprule
            Symbol & Description \\
            \hline
            $B$ & Batch size \\
            $\mathcal{B}$ & Data batch \\
            $c_g, \sigma_g$ & Batch gradient estimation noise factor \\
            $c_h, \sigma_h$ & Client-wise gradient estimation noise factor \\
            $\mathcal{D}$ & Dataset \\
            $d$ & Number of model parameters \\
            $k$ & Index of clients \\
            $K$ & Number of clients in an FL system \\
            $\mathcal{L}$ & Loss function \\
            $\mathcal{L}^*$ & Infimum value of loss function \\
            $L$ & Smooth constant of the loss function \\
            $n$ & Number of SPSA samples \\
            $\mathcal{N}$ & Gaussian distribution \\
            $p_{t,b}$ & Probability of a client being a Byzantine client at round $t$ \\
            $p_{t,e}$ & Inherent probability of a batch gradient estimate having a reversed sign at round $t$ \\
            $p_t$ & Overall probability of a batch gradient estimate having a reversed sign \\
            $s$ & Random seed \\
            $T$ & Number of global steps (step budget) \\
            $t$ & Index of global epochs (the number of total communication rounds) \\
            $\boldsymbol{w}$ & Model parameter vector \\
            $\alpha$ & Controlling parameter between gradient variance and optimization gap \\
            $\delta$ & Polyak-\L ojaciewicz property constant \\
            $\epsilon$ & Toleration threshold of the gap to error floor \\
            $\mu$ & Perturbation scale for forward passes-only fine-tuning \\
            $\zeta$ & Low-effective rank factor of the gradient estimator \\
            $\eta$ & Learning rate \\
            $\nabla$ & Gradient operator \\
            $\mathbb{E}$ & Expectation operator \\
            $\mathbb{V}$ & Variance operator \\
            $\mathbb{R}^n$ & $n$-dimensional real number set \\
            $\langle \cdot, \cdot \rangle$ & Inner product \\
            $tr$ & Trace operator \\
            $\|\cdot\|_{\text{op}}$ & Operator norm of matrices \\
            \hline
            $\mathcal{L}(\boldsymbol{w})$ & Loss function at model parameter $\boldsymbol{w}$ \\
            $\mathcal{L}_k(\boldsymbol{w})$ & Loss function of client $k$ at model parameter $\boldsymbol{w}$ \\
            $\hat{\mathcal{L}}_k(\boldsymbol{w}, \mathcal{B})$ & Loss function measured on data batch $\mathcal{B}$ at model parameter $\boldsymbol{w}$ on client $k$ \\
            $\mathcal{N}(\boldsymbol{\mu}, \boldsymbol{\Sigma})$ & Multivariate Gaussian distribution with center $\boldsymbol{\mu}$ and $\boldsymbol{\Sigma}$ \\
            \bottomrule
        \end{tabular}
    \end{center}
\end{table}

\section{Proof Sketch}

\subsection{Descent Lemma of \emph{FeedSign}}

Note that since \emph{FeedSign} is not an unbiased gradient estimator, one can not start from Lemma \ref{lm:descent} to reach this theorem. So we will need a different descent lemma for the global convergence bound.

\begin{theorem}[Dimension-free Descent Bound for \emph{FeedSign}]\label{thm:feedsign_dsc}
Given $\mathcal{L}(\boldsymbol{w})$ is a $L$-smooth function, the expected per-step loss descent can be bounded as follows: \begin{equation}\begin{aligned}\label{eq:lm3}
    \mathbb{E}(\mathcal{L}(\boldsymbol{w}_{t+1})) \leq \mathcal{L}(\boldsymbol{w}_t) - \eta (1 - 2p_t) \sqrt{\frac{2}{\pi}} \|\nabla\mathcal{L}(\boldsymbol{w}_t)\|_2 + \frac{\eta^2 L r}{2}.
\end{aligned}\end{equation}
where $\pi$ is the circumference ratio.
\end{theorem}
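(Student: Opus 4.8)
The plan is to start from a second-order \emph{local} expansion of $\mathcal{L}$ and take expectations over both the shared Gaussian perturbation $\boldsymbol{z} \sim \mathcal{N}(\boldsymbol{0}, \boldsymbol{I}_d)$ and the client minibatches. Since the \emph{FeedSign} update is $\boldsymbol{w}_{t+1} - \boldsymbol{w}_t = -\eta s \boldsymbol{z}$ with aggregated vote $s = \text{Sign}\!\big(\sum_{k} \text{Sign}(p_k)\big) \in \{-1,+1\}$, I would \emph{not} invoke the crude $L$-smooth inequality of \Cref{ass:l-smooth} directly (it would produce a dimension-dependent $\mathbb{E}\|\boldsymbol{z}\|_2^2 = d$ factor), but instead the sharper local Hessian control of \Cref{ass:r-effective}. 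For $\boldsymbol{w}$ on the segment joining $\boldsymbol{w}_t$ and $\boldsymbol{w}_{t+1}$ one has $\nabla^2\mathcal{L}(\boldsymbol{w}) \preceq \boldsymbol{H}(\boldsymbol{w}_t)$, so using $s^2 = 1$,
\begin{equation}
    \mathcal{L}(\boldsymbol{w}_{t+1}) \leq \mathcal{L}(\boldsymbol{w}_t) - \eta s \langle \nabla\mathcal{L}(\boldsymbol{w}_t), \boldsymbol{z}\rangle + \frac{\eta^2}{2}\, \boldsymbol{z}^\top \boldsymbol{H}(\boldsymbol{w}_t)\, \boldsymbol{z}.
\end{equation}

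For the linear term I condition on $\boldsymbol{z}$ and abbreviate the true projection $u = \langle \nabla\mathcal{L}(\boldsymbol{w}_t), \boldsymbol{z}\rangle$. Letting $\mu \to 0$ so each $p_k$ aligns in sign with $\boldsymbol{z}^\top\nabla\mathcal{L}_k(\boldsymbol{w}_t, \mathcal{B}_k)$, the aggregated vote $s$ disagrees with $\text{Sign}(u)$ with some conditional probability $p_t(\boldsymbol{z}) \leq p_t$ by \Cref{ass:srp}, whence $\mathbb{E}[s\,u \mid \boldsymbol{z}] = (1 - 2p_t(\boldsymbol{z}))|u| \geq (1 - 2p_t)|u|$ as long as $p_t < 1/2$. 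The dimension-free behaviour then emerges upon marginalizing over $\boldsymbol{z}$: since $u \sim \mathcal{N}(0, \|\nabla\mathcal{L}(\boldsymbol{w}_t)\|_2^2)$, the half-normal mean gives $\mathbb{E}_{\boldsymbol{z}}|u| = \sqrt{2/\pi}\,\|\nabla\mathcal{L}(\boldsymbol{w}_t)\|_2$, producing the leading descent term $-\eta(1-2p_t)\sqrt{2/\pi}\,\|\nabla\mathcal{L}(\boldsymbol{w}_t)\|_2$.

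For the quadratic term, $\mathbb{E}_{\boldsymbol{z}}[\boldsymbol{z}^\top \boldsymbol{H}(\boldsymbol{w}_t)\boldsymbol{z}] = \text{tr}(\boldsymbol{H}(\boldsymbol{w}_t))$, and the effective-rank clause of \Cref{ass:r-effective} together with $\boldsymbol{H}(\boldsymbol{w}_t) \preceq L\boldsymbol{I}_d$ yields $\text{tr}(\boldsymbol{H}(\boldsymbol{w}_t)) \leq r\,\|\boldsymbol{H}(\boldsymbol{w}_t)\|_{\text{op}} \leq rL$, bounding this contribution by $\eta^2 L r/2$. Chaining the two estimates gives exactly \eqref{eq:lm3}.

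The main obstacle is justifying the local quadratic expansion: to invoke $\nabla^2\mathcal{L}(\boldsymbol{w}) \preceq \boldsymbol{H}(\boldsymbol{w}_t)$ the displacement $\eta\|\boldsymbol{z}\|_2$ must remain inside the trust region $\|\boldsymbol{w} - \boldsymbol{w}_t\| \leq \eta d\, G(\boldsymbol{w}_t)$ of \Cref{ass:r-effective}, and since $\|\boldsymbol{z}\|_2$ concentrates near $\sqrt{d}$ this requires a high-probability/concentration control of the Gaussian tail (as in \emph{MeZO}) so that rare large-norm events do not spoil the expectation. A secondary subtlety is replacing the per-client \Cref{ass:srp} by the single overall reversal probability $p_t$ of the majority vote (and handling ties in $\sum_k \text{Sign}(p_k)$ when $K$ is even), which is where the binary-voting structure and the $\mu\to 0$ identification of $\text{Sign}(p_k)$ with the projection sign are needed.
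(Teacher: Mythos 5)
Your proposal is correct and takes essentially the same route as the paper's own proof: the local Hessian expansion under \Cref{ass:r-effective} (inherited from \emph{MeZO}) in place of crude $L$-smoothness, the sign-reversal probability producing the $(1-2p_t)$ factor on the linear term, the half-normal mean $\sqrt{2/\pi}\,\|\nabla\mathcal{L}(\boldsymbol{w}_t)\|_2$, and the trace bound $\mathrm{tr}(\boldsymbol{H}(\boldsymbol{w}_t)) \leq Lr$ for the quadratic term. The two subtleties you flag---keeping the displacement inside the trust region of \Cref{ass:r-effective}, and passing from per-client signs to the aggregated majority vote's reversal probability $p_t$---are treated just as implicitly in the paper's proof, so your version is, if anything, slightly more careful.
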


\begin{proof}
    Start from \Cref{ass:l-smooth}. Then, two key steps to arrive at our target are to calculate the inner product term and the quadratic term. 
    
    For the inner product term, the weight distance $\boldsymbol{w}_{t+1} - \boldsymbol{w}_t$ term will be replaced with an estimated gradient projection-related term. Note that $\text{Sign}(p) = p / |p|$, we can replace the weight distance with a term containing the unbiased gradient estimator $\nabla\mathcal{L}(\boldsymbol{w}_t, \mathcal{B})$. \Cref{ass:srp} further replaces it with a true gradient-related term. This term is eventually half-normal distributed, whose expectation can be easily derived. 
    
    For the quadratic norm term, since \emph{FeedSign} records no amplitude, the norm of the weight difference $\|\boldsymbol{w}_{t+1} - \boldsymbol{w}_t\|_2^2$ is fixed to a constant related to the Lipschitz constant $L$ and learning rate $\eta$. We will reach \Cref{thm:feedsign_dsc} after this.
\end{proof}

\begin{wrapfigure}{r}{0.4\textwidth}
    \vspace{-20pt}
    \centering
    \includegraphics[width=0.35\textwidth]{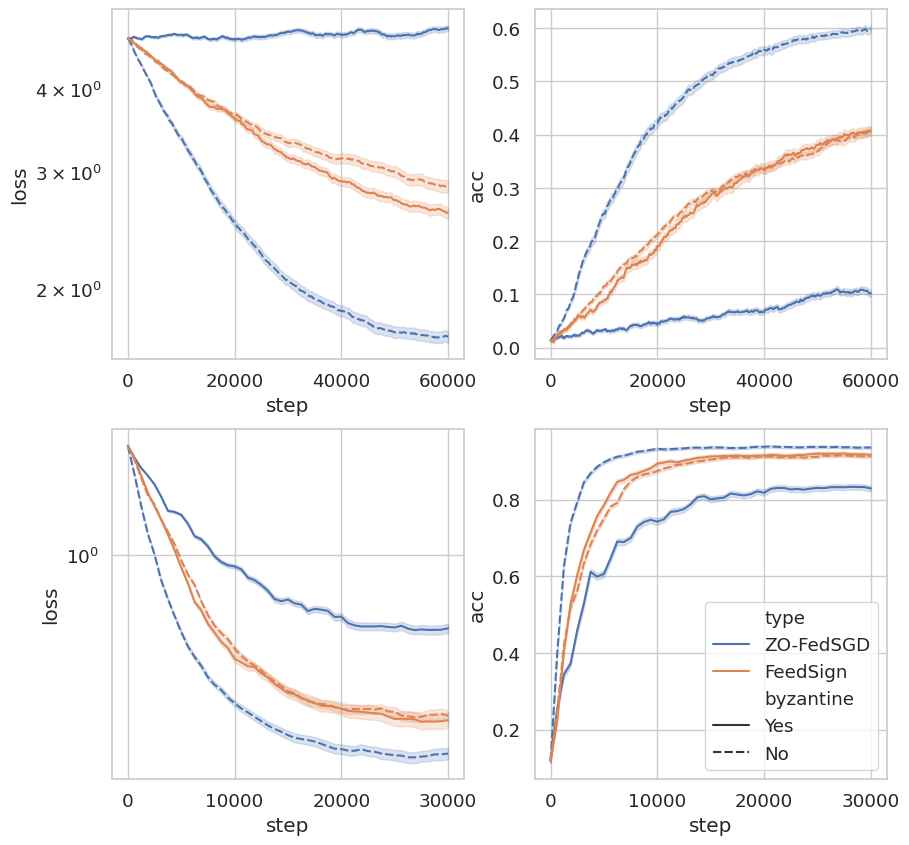}
    \vspace{-10pt}
    \caption{Loss and accuracy curve versus number of steps elapsed under Byzantine attacks.\label{fig:bzt}}
    \vspace{-50pt}
\end{wrapfigure}

With the stepwise loss descent bound of both \emph{ZO-FedSGD} and \emph{FeedSign}, we derive the global convergence in terms of expected loss.

\subsection{Proof Sketch to \Cref{thm:1}}

\begin{proof}
    Subtract $\mathcal{L}^*$ on both sides of the stepwise loss descent bound, and replace the quadratic gradient term to a term containing $\mathcal{L}(\boldsymbol{w}_t) - \mathcal{L}^*$. Then balance the factors so that a contraction inequality on the distance from $\mathcal{L}(\boldsymbol{w}_t)$ to its lower bound is reached. Iterate the contraction to harvest an exponential decay of the expected loss.
\end{proof}

\section{More Results}

\subsection{General Settings}


\textbf{Language Models.} As is done in \emph{MeZO}, we run few-shot learning for classification tasks on RoBERTa-large with $k=16$ samples per category, and general fine-tuning on OPT models. We employ test accuracy as the metric for classification and multiple-choice tasks and F1 score for generation tasks. Results are reported in \Cref{table:nlp_roberta_results}. We also report the results of the same task packages on OPT-125M in \Cref{table:differentK}.

\begin{table*}[t]
    \setlength{\tabcolsep}{5pt}
    \caption{Results on RoBERTa-large over language tasks. The best results obtained using federated ZO optimization is \textbf{bolded}, and the metric gap to that of the FO method is reported in the rightmost column.}
    \label{table:nlp_roberta_results}
    \begin{center}
        \begin{tabular}{lccccccc}
            \toprule
            \multicolumn{1}{c}{Task}  &\multicolumn{1}{c}{\bf SST-2}  &\multicolumn{1}{c}{\bf SST-5}  &\multicolumn{1}{c}{\bf SNLI}  &\multicolumn{1}{c}{\bf MNLI} &\multicolumn{1}{c}{\bf RTE}  &\multicolumn{1}{c}{\bf TREC} & \multirow{2}*{Gap} \\ 
            \multicolumn{1}{c}{Type} & \multicolumn{2}{c}{---- sentiment ----} & \multicolumn{3}{c}{- natural language inference -} & \multicolumn{1}{c}{-- topic --} & \\
            \hline
            Zero-shot & 79.0 & 35.5 & 50.2 & 48.8 & 51.4 & 32.0 & -- \\
            \hline
            \multicolumn{7}{c}{$\text{number of shots}=16$} \\
            \hdashline
            FO & 91.8 & 47.5 & 77.5 & 70.0 & 66.4 & 85.0 & -- \\
            \hdashline
            MeZO & 90.5 & 45.5 & 68.5 & 58.7 & 64.0 & 76.9 & -5.6 \\
            \hdashline
            \emph{ZO-FedSGD} & 88.9	& \bf 44.2&	67.9&	59.0&	60.3&	74.2&	-7.2 \\
            &(1.5)&	(2.0)&	(4.0)&	(5.3)&	(5.9)&	(3.6)& -- \\
            \bf \emph{FeedSign}& \bf 89.9&	\bf 44.2&	\bf 70.6&	\bf 64.2&	\bf 63.7&	\bf 78.4&	\bf -4.4  \\
            &(1.5)&	(0.7)&	(1.0)&	(2.6)&	(2.7)&	(3.0)& -- \\
            \bottomrule
        \end{tabular}
    \end{center}
\end{table*}

\textbf{Image models.} However, we also notice that for the last-layer FFT on a ViT-large model, although \emph{FeedSign} performs closely to \emph{ZO-FedSGD}, it cannot outperform. We infer that this could be accounted for by the good feature extraction ability of ViT models.

\begin{table*}[t]
    \setlength{\tabcolsep}{2pt}
    \caption{Main results on OPT-125M over language models with iid with different sizes of client pool.}
    \label{table:differentK}
    \begin{center}
        \begin{tabular}{lrccccccccccc}
            \toprule
            \multicolumn{1}{c}{Task} &\multicolumn{1}{c}{$K$} &\multicolumn{1}{c}{\bf SST-2}  &\multicolumn{1}{c}{\bf RTE}  &\multicolumn{1}{c}{\bf CB}  &\multicolumn{1}{c}{\bf BoolQ} &\multicolumn{1}{c}{\bf WSC}  &\multicolumn{1}{c}{\bf WIC} &\multicolumn{1}{c}{\bf MultiRC} &\multicolumn{1}{c}{\bf COPA}  &\multicolumn{1}{c}{\bf ReCoRD} &\multicolumn{1}{c}{\bf SQuAD}  &\multicolumn{1}{c}{\bf DROP} \\
            \hline
            Zero-shot & - & 51.2 & 53.0 & 48.2 & 41.5 & 37.5 & 51.2 & 49.7 & 69.0 & 51.7 & 9.5 & 4.4 \\
            \hline
            MeZO & - & 82.2 & 55.9 & 67.8 & 61.0 & 59.6 & 51.0 & 53.3 & 68.0 & 47.1 & 44.1 & 15.2 \\
            \hline

            \multirow{4}{*}{\emph{ZO-FedSGD}} & \multirow{2}{*}{5} & 84.2&	55.7&	66.7&	59.8&	\bf 56.0&	52.0&	55.1&	\bf 63.2&	\bf 48.4&	46.1&	17.2 \\
            & & (0.6)&	(1.9)&	(2.4)&	(1.4)&	(4.8)&	(1.5)&	(2.7)&	(3.0)&	(2.1)&	(1.2)&	(1.7) \\
             & \multirow{2}{*}{25} & 81.5&	52.1&	66.7&	59.4&	51.1&	52.2&	53.1&	\bf 66.2&	\bf 50.4&	37.5&	13.7  \\
             & & (1.8)&	(1.6)&	(2.0)&	(1.9)&	(9.8)&	(0.5)&	(4.5)&	(1.6)&	(2.0)&	(2.9)&	(1.4) \\
            \hdashline
            \multirow{4}{*}{\textit{\textbf{FeedSign}}} & \multirow{2}{*}{5} & \bf 84.7&	\bf 58.8&	\bf 67.0&	\bf 60.2&	52.4&	\bf 56.2&	\bf 60.7&	60.6&	46.4&	\bf 47.4&	\bf 18.2 \\
             &  & (0.7)&	(3.3)&	(2.0)&	(1.9)&	(8.2)&	(1.2)&	(1.7)&	(2.0)&	(2.1)&	(0.5)&	(0.3) \\
              & \multirow{2}{*}{25} & \bf 84.7&	\bf 56.6&	\bf 67.4&	\bf 60.8&	\bf 54.9&	\bf 54.0&	\bf 56.5&	64.8&	\bf 50.4&	\bf 45.6&	\bf 16.1 \\
             &  & (0.5)&	(3.2)&	(0.8)&	(1.1)&	(6.7)&	(1.5)&	(2.4)&	(1.6)&	(1.5)&	(0.8)&	(1.0) \\
            \bottomrule
    \end{tabular}
    \end{center}
\end{table*}

\subsection{Byzantine Resilience}

\vspace{-60pt}
\begin{wraptable}{r}{0.4\linewidth}
\centering
\caption{Results on ViT-large FFT, client pool size $K=5$, with one Byzantine attacker.}
\begin{tabular}{@{}lcc@{}}
\toprule
 & \bf CIFAR-10 & \bf CIFAR-100 \\ \midrule
\emph{ZO-FedSGD} &  83.9 (9.8)     & 10.9 (2.5)     \\
\emph{FeedSign}  & \bf 91.9 (6.7)     & \bf 40.8 (5.0)     \\ \bottomrule
\end{tabular}\label{table:vision models byzantine}
\vspace{-30pt}
\end{wraptable}

\Cref{table:vision models byzantine} and \Cref{fig:bzt} report the test accuracy with $1$ of the $5$ clients as a Byzantine client fine-tuning a ViT-large model. It can be observed that \emph{ZO-FedSGD} is completely compromised with the Byzantine attack, while \emph{FeedSign} maintains its performance.

\begin{wrapfigure}{r}{0.4\textwidth}
    \vspace{-40pt}
    \includegraphics[width=1\linewidth]{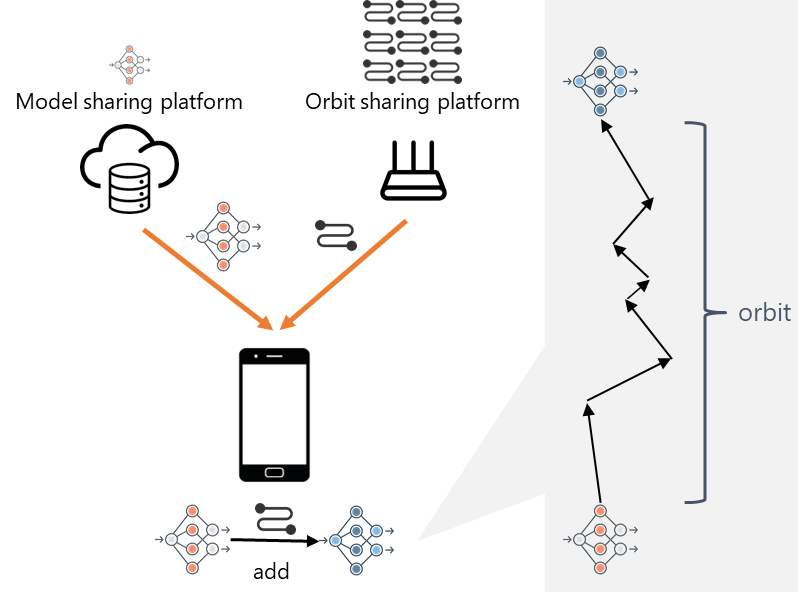}
    \vspace{-10pt}
    \caption{Orbits-based efficient model storage and sharing.\label{fig:orbits}}
    \vspace{-10pt}
\end{wrapfigure}

\section{Expanded Discussions}

With the performance of \emph{FeedSign} well evaluated, we look further for some byproducts brought by the design of the framework.

\subsection{Efficient Model Storage and Sharing}

\label{sec:faster}

It is estimated that over $600, 000$ models are stored in model sharing platforms like Huggingface, $90 \%$ of them are fine-tuned models \cite{ning2024fmdelta}. Frequently moving them results in PBs of monthly information transmission and storage demand. Notably, the platform can save only a small number of well-recognized checkpoints and save the \emph{orbits}, which is the collection of seed-projection pairs elapsed from a checkpoint to fine-tuned models by using seed-projection pairs as in \emph{ZO-FedSGD} or seed-sign pairs as in \emph{FeedSign}, as shown in \Cref{fig:orbits}. For example, for a fine-tuned OPT-13B model with $10, 000$ fine-tune steps, $24$GB of additional storage is required. However, the orbit generated by \emph{FeedSign} will occupy less than $200$ bytes of storage and guarantees perfect recovery of the fine-tuned model.



\subsection{Parameter Servers can be Small and Task Agnostic}

    
\begin{wrapfigure}{r}{0.4\textwidth}
    \includegraphics[width=1\linewidth]{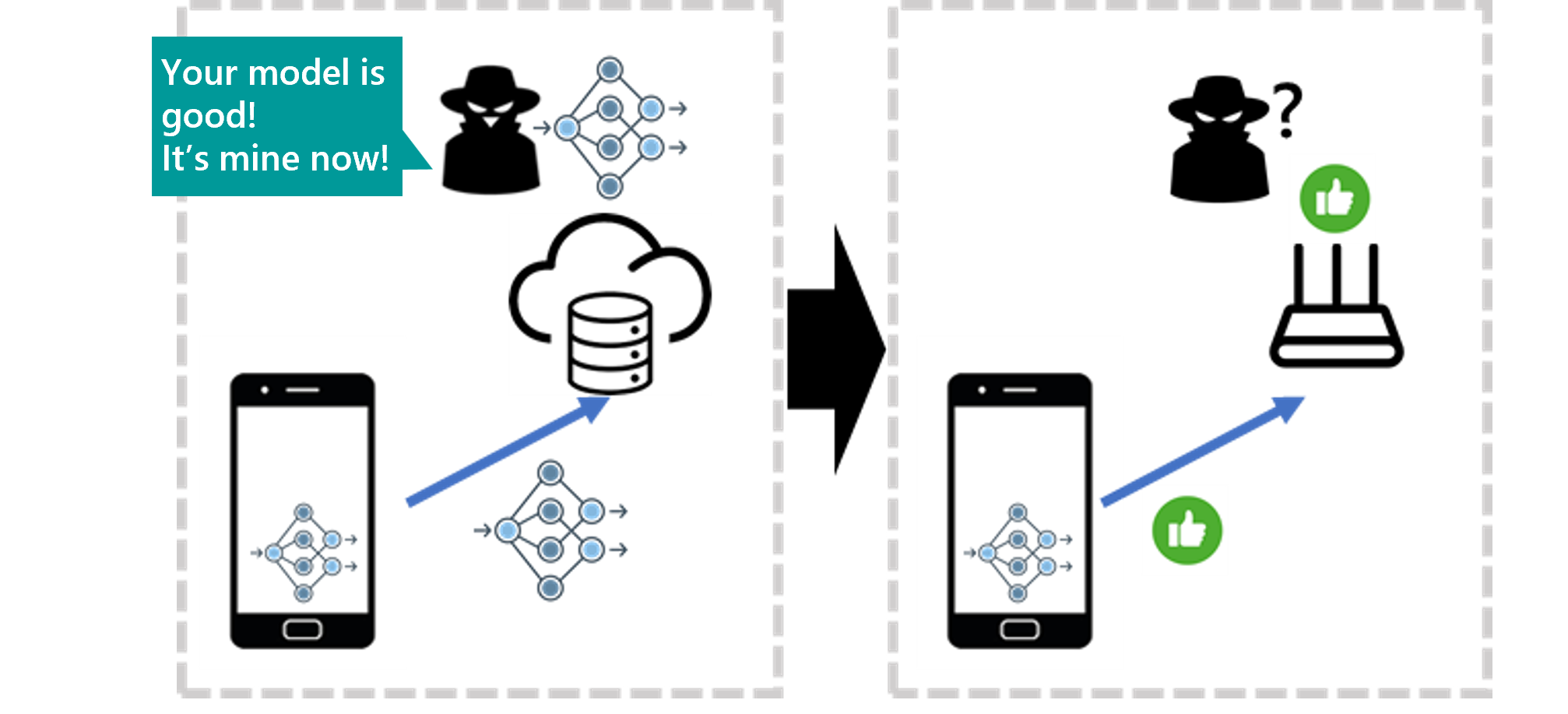}
    \caption{Orbits-based efficient model storage and sharing.\label{fig:comme}}
\end{wrapfigure}


A byproduct of PS holding no actual DL model parameter of \emph{FeedSign} is parameter security. 
This is because if operating \textit{FedAvg} without special design like homomorphic encryption \cite{liu2022sash}, \cite{mansouri2023sok}, generic secure multiparty computation \cite{burkhart2010sepia}, or additive masks \cite{so2021turbo, goryczka2015comprehensive}, the PS always knows the model parameters and hence has to be a legal holder of the final model. 
This is because if the PS aggregates model parameters, the final trained model is trivial for the PS to take. If the PS aggregates model updates, as long as the PS knows the initial model parameter, which is often assumed so, it can reconstruct the final trained model by recording every update made by the clients. Since the PS is usually assumed to be a data center or cloud PS in conventional FL system designs, the PS has sufficient resources to do this. However, \textbf{not only data but also models are kept private and local} in FL systems featuring designs similar to \emph{FeedSign}, as shown in \Cref{fig:comme}.

In fact, according to Section \ref{sec:faster}, the PS can be a device that is too small to host the actual model. Moreover, conventional model-sharing platforms need to maintain large storage to store millions of models. However, with a \emph{FeedSign}-like seed-projection pairs design, the platform will not need to store the actual parameters but only the orbits of elapsed seed-projection pairs during fine-tuning from some well-recognized checkpoints.

Additionally, in conventional designs, fortuitous clients that join the FL system midway should download the newest global model from the PS; this may lead to a global slowdown. Otherwise, the newly joining client could have downloaded an outdated model. In \emph{FeedSign}, the PS records and sends the \textit{orbit}, which is the collection of seed and projected gradient pairs generated by the clients since the start of the fine-tuning to the fortuitous clients, since there are no actual model parameters on the PS.

\subsection{Light-weight Integration with Differential Privacy}

\emph{FeedSign} can serve as an extremely memory-efficient framework that provides a strong privacy-convergence trade-off for different task requirements with a small modification on the aggregation rule.

\begin{definition}[Differentially Private Update Aggregation]\label{def:dp_agg}
    The global model of FL updates with learning rate $\eta$ under the following rule: \begin{align}
        (\textbf{\textit{DP-FeedSign}}) \quad \boldsymbol{w} \gets \boldsymbol{w} -  f_{\text{DP}}(p_1, \dots, p_K) \eta\boldsymbol{z}.
    \end{align}
    where $f_{\text{DP}}$ is a random variable with probability \begin{align}
        \text{Prob}(f_{\text{DP}} = 1) &= p_+ / (p_+ + p_-), \\
        \text{Prob}(f_{\text{DP}} = -1) &= p_- / (p_+ + p_-),
    \end{align} where \begin{align}
        &p_{\pm} = \exp\left(\frac{\epsilon q_\pm}{4}\right), \quad q_\pm = \sum_{k=1}^K \left(\frac12 \pm \frac{p_k}{|p_k|}\right)
    \end{align} with $K$ participating clients.
\end{definition}

\begin{theorem}[Differential Privacy Guarantee]\label{thm:dp}
    Algorithm \ref{alg:feedavg} with its update rule replaced as Definition \ref{def:dp_agg} is $(\epsilon, 0)$-DP.
\end{theorem}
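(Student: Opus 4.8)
The plan is to recognize the map in \Cref{def:dp_agg} as an instance of the exponential mechanism and then verify its two ingredients: the output range is the binary set $\{+1,-1\}$, and the score assigned to candidate $s\in\{+1,-1\}$ is $q_s$, so that $f_{\text{DP}}$ is drawn with probability $\propto\exp(\epsilon q_s/4)$. The key observation making the reduction clean is that the perturbation direction $\boldsymbol{z}$ is generated by the \emph{shared, data-independent} PRNG seed broadcast by the PS, so the only data-dependent quantity released in an aggregation step is the single sign $f_{\text{DP}}$; hence it suffices to privatize this one bit. I would first write $n_+=\#\{k:\operatorname{Sign}(p_k)=+1\}$ and $n_-=K-n_+$, so that $q_\pm=\tfrac{K}{2}\pm(n_+-n_-)$, which exhibits $q_\pm$ as (an affine function of) the vote tally.

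Next I would fix the neighboring relation: two client-data collections $\{\mathcal{D}_k\}$ and $\{\mathcal{D}'_k\}$ are neighbors if they differ in the data of a single client (equivalently a single record, which belongs to one client), with $K$ held fixed. Under this replacement relation at most one sign vote $\operatorname{Sign}(p_k)$ can flip, and flipping one vote sends $n_+-n_-$ to $n_+-n_-\pm 2$, so each of $q_+$ and $q_-$ changes by at most $2$. This identifies the sensitivity $\Delta q=2$, and the coefficient in \Cref{def:dp_agg} is exactly $\exp(\epsilon q_s/4)=\exp\!\big(\epsilon q_s/(2\Delta q)\big)$, the canonical exponential-mechanism form.

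With the sensitivity in hand I would prove the likelihood-ratio bound directly, since it is short and self-contained. For neighbors $D,D'$ and any output $s$, factor $\frac{\operatorname{Prob}[\mathcal{M}(D)=s]}{\operatorname{Prob}[\mathcal{M}(D')=s]}$ into the ratio of exponentiated scores at $s$ times the ratio of the normalizers $p_++p_-$. The first factor is $\exp\!\big(\tfrac{\epsilon}{4}(q_s(D)-q_s(D'))\big)\le\exp(\epsilon/2)$ by $\Delta q=2$; for the normalizer factor, the term-by-term bound $\exp(\epsilon q_{s'}(D')/4)\le\exp(\epsilon/2)\exp(\epsilon q_{s'}(D)/4)$ for each $s'\in\{+,-\}$ gives $\frac{p_+(D')+p_-(D')}{p_+(D)+p_-(D)}\le\exp(\epsilon/2)$. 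Multiplying yields the ratio $\le\exp(\epsilon)$, i.e. $(\epsilon,0)$-DP for the released bit, which is the assertion of \Cref{thm:dp} for one aggregation round. I expect the core inequality to be routine; the main point requiring care is the interpretation of the claim for the \emph{full} Algorithm~\ref{alg:feedavg}, which releases a fresh data-dependent bit each round. The honest reading is that the stated $(\epsilon,0)$ guarantee is per-round, and the $T$-round run inherits privacy by sequential composition (basic composition giving $(T\epsilon,0)$, or advanced composition a tighter $(\epsilon',\delta')$), so I would either state the per-round guarantee explicitly or allocate a per-round budget; the other bookkeeping subtlety is insisting on the fixed-$K$ replacement relation, since adding or removing a client perturbs the $K/2$ offset in $q_\pm$ and would break the clean $\Delta q=2$ accounting.
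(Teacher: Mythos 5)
Your proposal is correct and takes essentially the same route as the paper's proof: both view the aggregation as an exponential mechanism over $\{+1,-1\}$, identify the sensitivity of $q_\pm$ as $2$ under a single-client change (Hamming-distance-one neighboring on the vote vector, $K$ fixed), and bound the likelihood ratio by splitting it into a score factor and a normalizer factor, each at most $e^{\epsilon/2}$. Your closing caveat that the guarantee as proved is per-round, with the $T$-round algorithm requiring composition, is a subtlety the paper's proof silently skips over, since it too only bounds a single released bit.
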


\begin{remark}
    By pushing $\epsilon$ to $0$, we will have a stronger differential privacy (DP) guarantee, while the behavior of $f_{\text{DP}}$ will become more similar to $\text{Bernoulli}(0.5)$. This will result in $p_t$ in \Cref{thm:1} approaching $1 / 2$, slowing down the convergence of \emph{FeedSign}.
\end{remark}

\begin{remark}
    Like \cite{tang2024private}, our DP follows a new mechanism by only privatizing the gradient projection while it differs by having a discrete output. This is based on the fact that with the seed being broadcast and all machines sharing the same PRNG, the only uncertainty about the gradient for a malicious user is the sign of the corresponding gradient projection.
\end{remark}

\subsection{Impact of Byzantine Attacks on \emph{FeedSign}}

A characterization of how the ratio of Byzantine clients impacts \emph{FeedSign} is as follows.

\begin{proposition}[Reversed Sign Probability with Byzantine Clients] The batch gradient estimator $\hat{\nabla}\mathcal{L}_k(\boldsymbol{w}_t, \mathcal{B})$ will have a reversed sign to the true gradient $\nabla \mathcal{L}$ with a probability of \begin{align}
    p_t = p_{t, e} + p_{t, b} - p_{t, e}p_{t, b},
\end{align}
where $p_{t, e}$ is the inherent reversed sign probability due to batch gradient estimation error and $p_{t, b}$ is the proportion of Byzantine clients at step $t$.
\label{thm:reverse_prob}
\end{proposition}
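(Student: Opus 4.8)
The plan is to treat $p_t$ as the probability that a uniformly chosen participating client reports a sign opposing the true gradient projection $\boldsymbol{z}^\top \nabla\mathcal{L}(\boldsymbol{w}_t)$, and to compute it by the law of total probability, conditioning on whether that client is Byzantine. First I would fix the step $t$ and the direction $\boldsymbol{z}$ spawned by the shared PRNG, so that the only remaining randomness is (i) which client is drawn and (ii) the batch $\mathcal{B}$ entering the SPSA estimate. Since the proportion of Byzantine clients is $p_{t,b}$, a randomly drawn client is Byzantine with probability $p_{t,b}$ and honest with probability $1-p_{t,b}$; these two events give the partition I will condition on.

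Next I would characterize the reversed-sign event on each branch. On the honest branch the reported value is $\text{Sign}(p)$ computed from the client's own batch estimate, so by \Cref{ass:srp} it opposes the true projection with probability at most $p_{t,e}$, independently of the draw. On the Byzantine branch I would invoke the worst-case adversary from the Byzantine-resilience discussion: under the binary voting scheme the most damaging strategy is to always submit a sign opposing the true gradient, so the reversed-sign event occurs with probability $1$. The point to state carefully is that the adversary reverses relative to the true gradient, not relative to its own noisy estimate; writing $B$ for ``client is Byzantine'' and $R$ for ``the inherent batch estimate is reversed'', this makes the reported-reversed event equal to $B \cup (B^c \cap R) = B \cup R$, a union, whereas an adversary that merely flipped its own estimate would produce the symmetric difference $B \triangle R$ and the different constant $2p_{t,e}p_{t,b}$.

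Finally I would assemble the branches:
\begin{align}
    p_t &= \text{Prob}(B)\cdot 1 + \text{Prob}(B^c)\cdot p_{t,e} \notag \\
        &= p_{t,b} + (1-p_{t,b})p_{t,e} = p_{t,e} + p_{t,b} - p_{t,e}p_{t,b}, \notag
\end{align}
equivalently $1 - p_t = (1-p_{t,b})(1-p_{t,e})$, i.e. the sign is correct exactly when the drawn client is honest and its estimate is not inherently reversed. The arithmetic is routine once the branch probabilities are fixed, so I expect the only genuine obstacle to be the modeling justification on the Byzantine branch: arguing that the adversary can, and in the worst case does, force a reversal with probability one, and that this behavior is independent of the honest clients' inherent estimation noise so the factorization $(1-p_{t,b})(1-p_{t,e})$ is legitimate.
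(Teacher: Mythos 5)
Your proposal is correct and derives exactly the constant in the statement, but it is not the paper's route; in fact it diverges from the paper's own proof at precisely the point you flagged. The paper's proof models the Byzantine client as flipping its \emph{own} batch estimate before sending: it assigns a Byzantine client a failure probability of $1-p_{t,e}$ (such a client accidentally reports the correct sign exactly when its own estimate was already reversed, which has probability $p_{t,e}$), treats the number of failed votes $V$ among $K$ clients as binomial, and reads off $p_t = \mathbb{E}[V]/K$. That is your symmetric-difference model $B \triangle R$, and expanding the paper's stated expectation $\tfrac12 K + (\tfrac12 - p_{t,e})(2p_{t,b}-1)K$ indeed gives
\begin{align}
\frac{\mathbb{E}[V]}{K} = p_{t,e} + p_{t,b} - 2\,p_{t,e}\,p_{t,b},
\end{align}
which does \emph{not} equal the proposition's $p_{t,e}+p_{t,b}-p_{t,e}p_{t,b}$. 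So the paper's proof and its statement are mutually inconsistent, and your union model $B \cup R$ (an adversary that reverses relative to the \emph{true} gradient with probability one) is the one that actually yields the stated formula, via $1-p_t = (1-p_{t,b})(1-p_{t,e})$. What each approach buys: your omniscient adversary is the true worst case and matches the stated constant, but it requires the strong assumption that the attacker knows the sign of the true gradient projection; the paper's adversary is operationally realistic --- it can only flip what it computed from its own batch --- but it is strictly weaker whenever $p_{t,e}>0$, because flipping an already-reversed estimate accidentally helps the PS, which is exactly where the factor $2$ comes from. Your write-up is the cleaner of the two: the conditioning on the Byzantine event is explicit, the independence needed for the factorization is stated, and the distinction between the two adversary semantics --- which the paper silently blurs --- is precisely what determines whether the cross term is $p_{t,e}p_{t,b}$ or $2p_{t,e}p_{t,b}$.
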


\section{The Range of Sign-Reversing Probability}

As stated in \Cref{thm:1}, the convergence of \emph{FeedSign} is dependent on the sign-reversing probability $p_t$, so knowledge of the range of this value is necessary for us to understand the convergence behavior of \emph{FeedSign}. Since the impact of Byzantine attacks is captured in \Cref{thm:reverse_prob}, we will focus on $p_{t,e}$.

\begin{wrapfigure}{r}{0.3\textwidth}
    \includegraphics[width=1\linewidth]{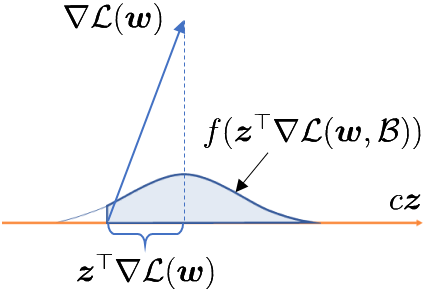}
    \vspace{-10pt}
    \caption{Inherent sign-reversing probability density.\label{fig:pte}}
    \vspace{-20pt}
\end{wrapfigure}

\subsection{Analysis}

Following most of the research works, we assume the batch sampling is unbiased, so that $\mathbb{E}_{\mathcal{B}}[\nabla \mathcal{L}(\boldsymbol{w}, \mathcal{B})] = \nabla\mathcal{L}(\boldsymbol{w})$. To utilize an analysis on $p_{t, e}$, we make the following assumption.

\begin{assumption}\label{ass:pte}
    Given model $\boldsymbol{w}$, loss function $\mathcal{L}$, and a random direction $\boldsymbol{z}$ satisfying all necessary conditions. The distribution of gradient projection estimation $\boldsymbol{z}^\top\mathcal{L}(\boldsymbol{w}, \mathcal{B})$ with batch $\mathcal{B}$ is symmetric on the randomness of batch sampling.
\end{assumption}

\begin{proposition}\label{prop:pte}
    The inherent sign-reversing probability is \begin{align}
        p_{t, e} = \begin{cases}
            F(0), & \text{if }\theta \leq \frac{\pi}{2} \\
            1-F(0), & \text{otherwise}
        \end{cases},
    \end{align}
    so $p_{t,e} \leq 1/2$ and equality holds if and only if $\theta = \frac{\pi}{2}$, with $F$ being the cumulative density function of $\boldsymbol{z}^\top \nabla \mathcal{L}(\boldsymbol{w}, \mathcal{B})$ and $\theta$ begin the angle between $\boldsymbol{z}$ and $\nabla \mathcal{L}(\boldsymbol{w})$.
\end{proposition}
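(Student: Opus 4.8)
The plan is to reduce the geometric sign-reversing event to a one-dimensional statement about the scalar random variable $X := \boldsymbol{z}^\top \nabla \mathcal{L}(\boldsymbol{w}, \mathcal{B})$, whose only randomness is the batch sampling, and the deterministic scalar $m := \boldsymbol{z}^\top \nabla \mathcal{L}(\boldsymbol{w})$. Under the unbiased batch sampling hypothesis $\mathbb{E}_{\mathcal{B}}[\nabla \mathcal{L}(\boldsymbol{w}, \mathcal{B})] = \nabla \mathcal{L}(\boldsymbol{w})$, linearity of the inner product gives $\mathbb{E}[X] = m$, so $X$ is a scalar estimator centered at the true projection $m$, and $F$ is precisely its CDF. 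The sign-reversing event of \Cref{ass:srp} is exactly $\{Xm < 0\}$, so I would carry out the whole argument by splitting on the sign of $m$.

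First I would convert the sign of $m$ into the angle condition: since $m = \|\boldsymbol{z}\|_2\,\|\nabla \mathcal{L}(\boldsymbol{w})\|_2 \cos\theta$, we have $m > 0 \iff \theta < \tfrac{\pi}{2}$, $m < 0 \iff \theta > \tfrac{\pi}{2}$, and $m = 0 \iff \theta = \tfrac{\pi}{2}$. When $\theta < \tfrac{\pi}{2}$ the event $\{Xm<0\}$ collapses to $\{X<0\}$, giving $p_{t,e} = \text{Prob}(X<0) = F(0)$; when $\theta > \tfrac{\pi}{2}$ it collapses to $\{X>0\}$, giving $p_{t,e} = \text{Prob}(X>0) = 1-F(0)$. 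This is exactly the two-branch formula claimed in the statement.

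Next I would invoke the symmetry in \Cref{ass:pte}: a law symmetric about its mean has median equal to mean, so $F(m) = \tfrac12$. Monotonicity of the CDF then closes both branches at once. For $m>0$ we have $0<m$, hence $F(0)\le F(m)=\tfrac12$; for $m<0$ we have $m<0$, hence $1-F(0)=\text{Prob}(X>0)\le\text{Prob}(X>m)=\tfrac12$. In either branch $p_{t,e}\le\tfrac12$, and the boundary $\theta=\tfrac{\pi}{2}$ gives $m=0$ and $F(0)=\tfrac12$.

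The equality characterization is where I expect the only real subtlety, and it is the step I would treat most carefully. Equality in the monotonicity step requires the symmetric law to assign zero probability to the open interval strictly between $0$ and its center $m$; I would rule this out by making explicit the mild regularity that the batch-projection law admits a positive density near its mean — natural for the approximately Gaussian fluctuations of SPSA batch estimates — so that $F(0)=\tfrac12$ forces $m=0$, i.e. $\theta=\tfrac{\pi}{2}$. I would also flag the degenerate point $m=0$: the literal event $\{Xm<0\}$ then has probability zero, so the value $\tfrac12$ appearing in the proposition should be read as the limiting (upper-bound) value consistent with \Cref{ass:srp}'s role, rather than the exact probability at the measure-zero configuration $\theta=\tfrac{\pi}{2}$.
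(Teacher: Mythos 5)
Your proof is correct and follows essentially the same route as the paper: the paper's two-sentence argument (``consider the plane spanned by $\boldsymbol{z}$ and $\nabla\mathcal{L}(\boldsymbol{w})$; the desired event corresponds to the blank part of the area under the density'') is precisely your one-dimensional reduction, identifying $p_{t,e}$ with the mass of $X = \boldsymbol{z}^\top\nabla\mathcal{L}(\boldsymbol{w},\mathcal{B})$ on the wrong side of zero and comparing $F(0)$ against the median via \Cref{ass:pte}. If anything, your write-up is more careful than the paper's: the need for positive density near the mean to get the ``only if'' direction of the equality case, and the fact that the literal event $\{Xm<0\}$ has probability zero (not $\tfrac12$) at the degenerate configuration $\theta=\tfrac{\pi}{2}$, are genuine subtleties that the paper's pictorial proof silently glosses over.
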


\begin{proof}
    Consider the 2-dimensional plane spanned by $\boldsymbol{z}$ and $\nabla \mathcal{L}(\boldsymbol{w})$. Noticing that the desired event corresponds to the blank part of the area under the density function, we arrive at the result stated above.
\end{proof}

This property will guarantee that $0 < A < 1$ in \Cref{thm:1} is always true if there are no Byzantine attackers, and contributes to the robustness of \emph{FeedSign}.

\subsection{Simulations}

We run a small simulation to support \Cref{prop:pte}. We run OPT-125m on the SST2 task. We choose a training set consisting of $5000$ samples, and for every $4000$ steps, we evaluate the gradient projection by sample for the gradient direction corresponding to seeds $s = 0$ to $39$. We average the gradient projections to obtain $\boldsymbol{z}_s^\top \mathcal{L}(\boldsymbol{w}_t)$. We then uniformly sample $10000$ batches of size $64$ and take the average of the gradient projection $\boldsymbol{z}_s^\top \mathcal{L}(\boldsymbol{w}_t, \mathcal{B})$ by sample within each batch $\mathcal{B}$. We compute $p_{t,e}$ as the proportion of the batches holding a gradient projection by batch against $\boldsymbol{z}_s^\top\nabla \mathcal{L}(\boldsymbol{w}_t)$ given step $t$ and seed $s$.

\subsubsection{Range of Inherent Sign-Reversing Probability}

We report the measured $p_{t, e}$ as \Cref{fig:pte_sim1}.

\begin{figure}[h]
    \centering
    \includegraphics[width=1\linewidth]{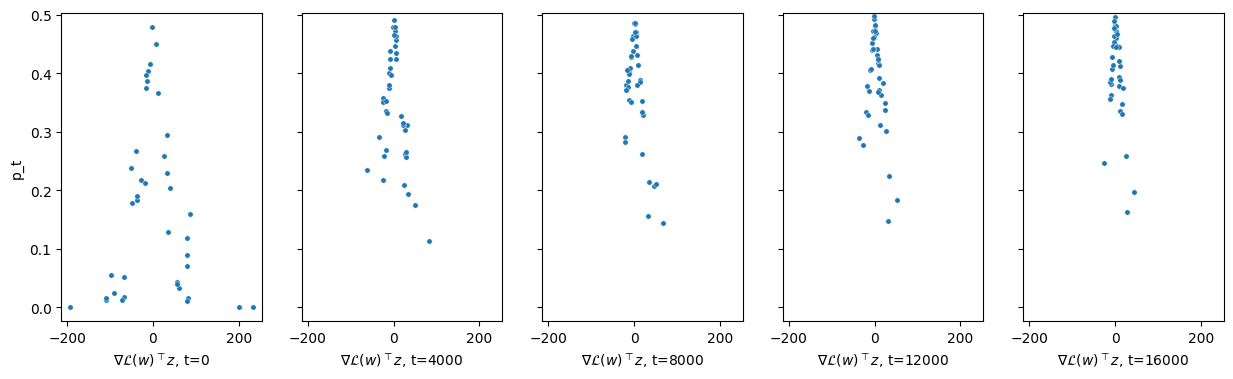}
    \caption{Inherent sign-reversing probability vs. $\boldsymbol{z}_s^\top\nabla\mathcal{L}(\boldsymbol{w})$.}
    \label{fig:pte_sim1}
\end{figure}

We noticed that the gradient projections are generally small, which is due to the fact that the space dimension is high. The measured highest reading of $p_{t,e} = 0.4968$ is at $t = 12000$, with $\boldsymbol{z}^\top \nabla \mathcal{L}(\boldsymbol{w}) = -0.3330$.

\subsubsection{Symmetric Distribution of $\boldsymbol{z}^\top \mathcal{L}(\boldsymbol{w}, \mathcal{B})$}

We report $\boldsymbol{z}_s^\top \nabla\mathcal{L}(\boldsymbol{w}, \mathcal{B})$ from $s=0$ to $4$, $t=0, 4000, 8000, 12000, 16000$ in \Cref{fig:pte_sim2}. The red lines marks the corresponding $\boldsymbol{z}^\top \nabla \mathcal{L}(\boldsymbol{w})$. The distributions manifest an obvious symmetric pattern against the randomness of batch sampling as an empirical support to \cref{ass:pte}.

\begin{figure}[h]
    \centering
    \includegraphics[width=1\linewidth]{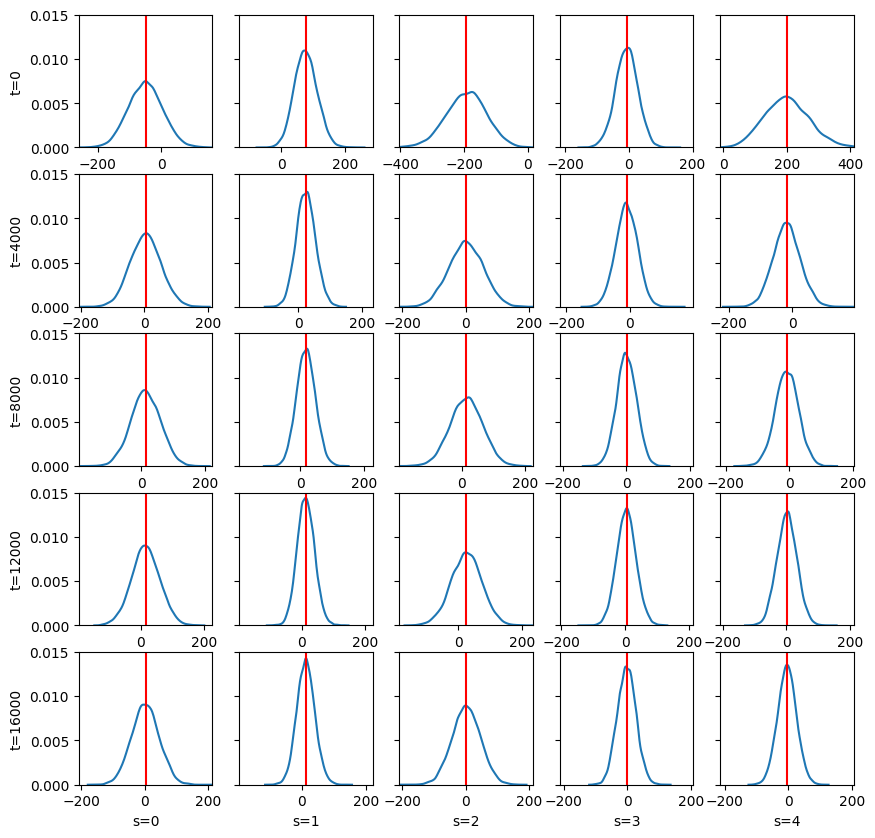}
    \caption{Distribution of $\boldsymbol{z}_s^\top\nabla\mathcal{L}(\boldsymbol{w}, \mathcal{B})$. The red lines are their corresponding $\boldsymbol{z}^\top \nabla\mathcal{L}(\boldsymbol{w})$.}
    \label{fig:pte_sim2}
\end{figure}

\section{Proofs}

\label{appd:proofs}

\subsection{Proof to \Cref{thm:feedsign_dsc}}

\begin{proof}

Following the first lines of Proof of Theorem 1 from \cite{malladi2023fine} with \Cref{ass:r-effective}, we have \begin{align}
    \mathcal{L}(\boldsymbol{w}_{t+1}) \leq \mathcal{L}(\boldsymbol{w}_t) - \eta \nabla \mathcal{L}(\boldsymbol{w}_{t})^\top (\boldsymbol{w}_{t+1} - \boldsymbol{w}_t) + \frac12 \eta^2 (\boldsymbol{w}_{t+1} - \boldsymbol{w}_t)^\top \boldsymbol{H}(\boldsymbol{w}_t)(\boldsymbol{w}_{t+1} - \boldsymbol{w}_t).
\end{align}

Taking expectation over $\mathcal{B}$ and $\boldsymbol{z}$, we have \begin{align}\label{eq:pf1}
    \mathbb{E}[\mathcal{L}(\boldsymbol{w}_{t+1})] \leq \mathcal{L}(\boldsymbol{w}_t) - \eta\mathbb{E}_{\boldsymbol{z}, \mathcal{B}} \left[ \frac{\boldsymbol{z}^\top\nabla \mathcal{L}(\boldsymbol{w}_t)\boldsymbol{z}^\top \nabla{\mathcal{L}(\boldsymbol{w}_t, \mathcal{B})}}{|\boldsymbol{z}^\top \nabla{\mathcal{L}(\boldsymbol{w}_t, \mathcal{B})}|}\right] + \mathbb{E}_{\boldsymbol{z}, \mathcal{B}}\left[\frac{\eta^2 \boldsymbol{z}^\top \nabla{\mathcal{L}(\boldsymbol{w}_t, \mathcal{B})\boldsymbol{z}^\top\boldsymbol{H}(\boldsymbol{w}_t)\boldsymbol{z}^\top \nabla{\mathcal{L}(\boldsymbol{w}_t, \mathcal{B})\boldsymbol{z}}}}{2(\boldsymbol{z}^\top \nabla{\mathcal{L}(\boldsymbol{w}_t, \mathcal{B})})^2}\right].
\end{align}

For the negative term in \cref{eq:pf1}, the best case is when $\boldsymbol{z}^\top \nabla\mathcal{L}(\boldsymbol{w}_t, \mathcal{B})$ always holding the same sign to $\boldsymbol{z}^\top \nabla\mathcal{L}(\boldsymbol{w}_t)$. In this case, the term inside the expectation wrapper will always be positive to imply a descending loss. However, batch sampling will result in errors with \Cref{ass:srp}, which gives \begin{align}\label{eq:pf2}
    \mathbb{E}[\mathcal{L}(\boldsymbol{w}_{t+1})] \leq \mathcal{L}(\boldsymbol{w}_t) - \eta (1 - 2p_t) \mathbb{E}_{\boldsymbol{z}}\left[\frac{\boldsymbol{z}^\top \nabla{\mathcal{L}(\boldsymbol{w}_t)}\boldsymbol{z}^\top \nabla\mathcal{L}(\boldsymbol{w}_t)}{|\boldsymbol{z}^\top \nabla\mathcal{L}(\boldsymbol{w}_t)|}\right] + \mathbb{E}_{\boldsymbol{z}}\left[\frac{\eta^2 \boldsymbol{z}^\top \boldsymbol{H}(\boldsymbol{w}_t)\boldsymbol{z}}{2}\right].
\end{align}

Notice that $(\boldsymbol{z}^\top \nabla\mathcal{L}(\boldsymbol{w}_t))^2 / |\boldsymbol{z}^\top \nabla\mathcal{L}(\boldsymbol{w}_t)|$ is essentially $|\boldsymbol{z}^\top \nabla\mathcal{L}(\boldsymbol{w}_t)|$. Since $\boldsymbol{z}$ is a Gaussian vector centered at $0$, this term will be half-Gaussian with a mean of $\sqrt{\frac{2}{\pi}}\|\nabla\mathcal{L}(\boldsymbol{w}_t)\|_2$. By \cref{ass:r-effective}, $\text{tr}(\boldsymbol{H}(\boldsymbol{w}_t)) \leq Lr$, the bound will finally be \begin{align}
    \mathbb{E}(\mathcal{L}(\boldsymbol{w}_{t+1})) \leq \mathcal{L}(\boldsymbol{w}_t) - \eta (1 - 2p_t) \sqrt{\frac{2}{\pi}} \|\nabla\mathcal{L}(\boldsymbol{w}_t)\|_2 + \frac{\eta^2 L r}{2}.
\end{align}

\end{proof}


\label{appd:proof to prop steps}

\subsection{Proof to \Cref{thm:1}}

For FedSGD, we have the following well-known result:

\begin{lemma}[Dimension-free Descent Lemma for FedSGD] Given $\mathcal{L}(\boldsymbol{w})$ is a $L$-smooth function and $\hat{\nabla} \mathcal{L}(\boldsymbol{w}, \mathcal{B})$ an unbiased gradient estimator, the expected per-step loss descent can be bounded as follows:
    \begin{align}
        \mathbb{E}\left[\mathcal{L}(\boldsymbol{w}_{t+1})\right] \leq & \mathcal{L}(\boldsymbol{w}_t) - \eta \|\nabla \mathcal{L}(\boldsymbol{w}_t)\|_2^2 + \frac{L \eta^2}{2} \mathbb{E}_{k, \mathcal{B}}\left[\|\nabla\mathcal{L}_k(\boldsymbol{w}_t, \mathcal{B})\|_2^2\right].
    \end{align}
    \label{lm:descent_fo}
\end{lemma}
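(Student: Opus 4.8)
The plan is to derive this descent bound directly from the $L$-smoothness assumption (\Cref{ass:l-smooth}) together with the unbiasedness of the gradient estimator (\Cref{ass:unbiased}), mirroring the textbook derivation of the SGD one-step descent lemma but tracking the client-averaged, batch-sampled estimate used in the FedSGD update.

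First I would instantiate the $L$-smoothness inequality at consecutive iterates, writing the update as $\boldsymbol{w}_{t+1} - \boldsymbol{w}_t = -\eta \hat{\nabla}\mathcal{L}(\boldsymbol{w}_t, \mathcal{B})$, where $\hat{\nabla}\mathcal{L}(\boldsymbol{w}_t, \mathcal{B})$ denotes the aggregated gradient estimate. Substituting this into \Cref{ass:l-smooth} gives
\[
\mathcal{L}(\boldsymbol{w}_{t+1}) \leq \mathcal{L}(\boldsymbol{w}_t) - \eta\langle\nabla\mathcal{L}(\boldsymbol{w}_t), \hat{\nabla}\mathcal{L}(\boldsymbol{w}_t, \mathcal{B})\rangle + \frac{L\eta^2}{2}\|\hat{\nabla}\mathcal{L}(\boldsymbol{w}_t, \mathcal{B})\|_2^2.
\]

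Next I would take the expectation over the client index $k$ and the batch $\mathcal{B}$, conditioning on $\boldsymbol{w}_t$. Since $\nabla\mathcal{L}(\boldsymbol{w}_t)$ is deterministic given $\boldsymbol{w}_t$ and the estimator is unbiased by \Cref{ass:unbiased}, i.e. $\mathbb{E}_{k,\mathcal{B}}[\hat{\nabla}\mathcal{L}(\boldsymbol{w}_t, \mathcal{B})] = \nabla\mathcal{L}(\boldsymbol{w}_t)$, the cross term collapses by linearity of the inner product and expectation to $-\eta\|\nabla\mathcal{L}(\boldsymbol{w}_t)\|_2^2$. The quadratic term is left untouched as the raw second moment $\frac{L\eta^2}{2}\,\mathbb{E}_{k,\mathcal{B}}[\|\nabla\mathcal{L}_k(\boldsymbol{w}_t, \mathcal{B})\|_2^2]$, identifying the aggregated estimate with the sampled-client gradient estimate $\nabla\mathcal{L}_k(\boldsymbol{w}_t, \mathcal{B})$ consistent with the paper's notation and the averaging rule. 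Collecting these two terms yields exactly the stated bound.

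I do not anticipate a genuine obstacle, as this is the canonical one-step descent inequality for stochastic gradient methods; the only point requiring care is the bookkeeping of the expectation, specifically handling the inner-product term before the second-moment term and interpreting $\mathbb{E}_{k,\mathcal{B}}[\|\nabla\mathcal{L}_k\|_2^2]$ consistently with the FedSGD aggregation. I would emphasize that this lemma deliberately leaves the second-moment term unexpanded: the heterogeneity and batch-noise decomposition of \Cref{ass:unbiased} is only invoked later, in the proof of \Cref{thm:1}, where this term is re-expressed in terms of $\|\nabla\mathcal{L}(\boldsymbol{w}_t)\|_2^2$ and the constants $c_g, c_h, \sigma_g, \sigma_h$ before the Polyak-\L ojasiewicz inequality (\Cref{ass:PL}) converts it into a contraction.
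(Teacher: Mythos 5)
Your proposal is correct and takes essentially the same route as the paper, whose entire proof of this lemma is the single line that it ``follows combining the unbiasedness of the FO gradient estimator and \Cref{ass:l-smooth}'' --- precisely the textbook expansion you carry out (smoothness at consecutive iterates, expectation, unbiasedness collapsing the cross term, second moment left unexpanded). The one bookkeeping step worth making explicit is that when the update averages over $K$ clients, passing from $\bigl\|\tfrac{1}{K}\sum_{k}\hat{\nabla}\mathcal{L}_k(\boldsymbol{w}_t,\mathcal{B}_k)\bigr\|_2^2$ to $\mathbb{E}_{k,\mathcal{B}}\bigl[\|\nabla\mathcal{L}_k(\boldsymbol{w}_t,\mathcal{B})\|_2^2\bigr]$ requires one application of Jensen's inequality, a point your ``identification'' of the aggregated estimate with the per-client estimate glosses over but which does not affect the validity of the bound.
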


This result follows combining the unbiasedness of the FO gradient estimator and \Cref{ass:l-smooth}.

For ZO-FedSGD, 

\begin{proof}
We have \begin{align}
    &\mathbb{E}\left[\mathcal{L}(\boldsymbol{w}_{t+1})\right] \notag \\
    \leq& \mathcal{L}(\boldsymbol{w}_t) - \eta \|\nabla \mathcal{L}(\boldsymbol{w}_t)\|_2^2 + \frac{L \zeta \eta^2}{2} \mathbb{E}_{k, \mathcal{B}}\left[\|\nabla \mathcal{L}_k(\boldsymbol{w}_t, \mathcal{B})\|_2^2\right] \\
    \leq& \mathcal{L}(\boldsymbol{w}_t) - \eta \|\nabla \mathcal{L}(\boldsymbol{w}_t)\|_2^2 + \frac{L \zeta \eta^2}{2} c_g(1 + c_h) \|\nabla \mathcal{L}(\boldsymbol{w}_t)\|_2^2 + \frac{L \zeta \sigma_g^2 \eta^2}{2 KB} \mathbb{V}\left[\nabla \mathcal{L}(\boldsymbol{w}_t)\right] + \frac{L \zeta c_g \sigma_h^2 \eta^2}{2} \\
    \leq& \mathcal{L}(\boldsymbol{w}_t) - \left(\eta - \frac{L \zeta \eta^2 c_g(1 + c_h)}{2}\right) \|\nabla \mathcal{L} (\boldsymbol{w}_t)\|_2^2 + \frac{L \zeta \sigma_g^2 \eta^2}{2 KB} \mathbb{V}[\nabla \mathcal{L}(\boldsymbol{w}_t)] + \frac{L \zeta c_g \sigma_h^2 \eta^2}{2} \\
    \leq& \mathcal{L}(\boldsymbol{w}_t) - A_2(\mathcal{L}(\boldsymbol{w}_t) - \mathcal{L}^*) + C_2,
\end{align}
with a sufficiently small $\eta$ satisfying \begin{align}
    0 < \eta < 2 / L \zeta c_g(1 + c_h),
\end{align}
where \begin{align}
A_2 &= \left(2 \delta \eta - L \zeta \delta \eta^2 c_g (1 + c_h) - \frac{L \zeta \alpha \sigma_g^2 \eta^2}{KB}\right), \\
C_2 &= \frac{L \zeta c_g \sigma_h^2 \eta^2}{2}.
\end{align}

Substract $\mathcal{L}^*$ on both sides, then apply \Cref{ass:PL}, we have \begin{align}
    \mathbb{E}[\mathcal{L}(\boldsymbol{w}_{t+1})] - \mathcal{L}^* &\leq \left(1 - A_2\right) \left(\mathcal{L}(\boldsymbol{w}_t) - \mathcal{L}^*\right) + C_2. 
\end{align}

With proper redistribution of the $C_1$ term, we have an error bound $\tilde{C}_2 = C_2 / A_2$, and to reach an optimality gap smaller than $\epsilon$ will take \begin{align}
    t = A_2 \log \frac{\mathcal{L}(\boldsymbol{w}_0) - \mathcal{L}^* - \tilde{C}_2}{\epsilon}
\end{align} steps with $0 < A_2 < 1$.

We will have \begin{align}
    \mathbb{E}[\mathcal{L}(\boldsymbol{w}_{t+1})] - \mathcal{L}^* &\leq \left(1 - {A_1}\right) \left(\mathcal{L}(\boldsymbol{w}_t) - \mathcal{L}^*\right) + {C_1}. 
\end{align}
with similar processing for FedSGD for its exponential convergence, where \begin{align}
    A_1 &= \left(2 \delta \eta - L \delta \eta^2 c_g (1 + c_h) - \frac{L \alpha \sigma_g^2 \eta^2}{KB}\right), \\
    C_1 &= \frac{L c_g \sigma_h^2 \eta^2}{2}.
\end{align}


The quadratic term of weight difference vanishes since \textit{FeedSign} does not contain ``amplitude'' of the gradient projection.
\begin{align}
    \mathbb{E}[\mathcal{L}(\boldsymbol{w}_{t+1})] \leq \mathcal{L}(\boldsymbol{w}_t) - \eta (1 - 2p_t) \sqrt{\frac{2}{\pi}} \|\nabla\mathcal{L}(\boldsymbol{w}_t)\|_2 + \frac{\eta^2 L r}{2}.
\end{align}
Take $0 < \eta \leq \frac{1}{\|\nabla \mathcal{L}(\boldsymbol{w}_t)\|}$ to obtain a quadratic norm term, then apply \Cref{ass:PL}, then we have \begin{align}
    \mathbb{E}[\mathcal{L}(\boldsymbol{w}_{t+1})] \leq \mathcal{L}(\boldsymbol{w}_t) - \eta^2 (1 - 2p_t) \sqrt{\frac{2}{\pi}} 2 \delta (\mathcal{L}(\boldsymbol{w}_t) - \mathcal{L}^*) + \frac{\eta^2 L r}{2}.
\end{align}
Similar processes yields \begin{align}
    A_3& = 2\sqrt{\frac{2}{\pi}}\delta\eta^2(1 - 2 \max_{t} p_t), \\
    C_3& = \frac{L r \eta^2}{2}.
\end{align}

\end{proof}

\subsection{Proof to Proposition \ref{thm:reverse_prob}}

\begin{proof}
    In \emph{FeedSign}, assume at a particular point $\boldsymbol{w}_t$, the sign of the true gradient is $f_t = \nabla \mathcal{L}(\boldsymbol{w}_t) / |\mathcal{L}(\boldsymbol{w}_t)|$. We say that a client \textbf{successes} if it sends a correct sign to the PS, and \textbf{fails} otherwise. After local computation, honest clients always send the sign, and Byzantine clients always reverse the sign and then send it. Due to batch gradient noise, the probability of an honest fail is $p_{t, e}$ and an honest success is $1 - p_{t, e}$. Contradictorily, the probability of a Byzantine fail and Byzantine success is $1 - p_{t, e}$ and $p_{t, e}$, respectively. Assume the probability of a client being Byzantine is $p_{t, b}$.

    During a vote, the number of fails is a random variable $V$ that follows a binomial distribution with \begin{align}
        \mathbb{E}[V] &= \frac12 K + (\frac12 - p_{t, e})(2p_{t, b} - 1)K, \\
        \mathbb{V}[V] &= (\frac{1}{4} - p_{t, e}^2).
    \end{align}

    The adjusted error rate with Byzantine clients will be $\mathbb{E}[V] / K$.
\end{proof}

\subsection{Proof to \Cref{thm:dp}}

\begin{proof}
Denote $\mathcal{F} := \{1, -1\}$, and $\boldsymbol{p} := (p_1, \dots, p_K)$. Denote $\|\cdot, \cdot\|_1$ the Hamming distance of two vectors. Then for any $\boldsymbol{p} \in \mathcal{S}^K$ with $\|\boldsymbol{p}, \boldsymbol{p}'\|_1 \leq 1$ and any $f \in \mathcal{F}$, denoting $\hat{f} := f_{\text{DP}}(\boldsymbol{p})$, $\hat{f}' := f_{\text{DP}}(\boldsymbol{p}')$, \begin{align}
    &\frac{\text{Prob}(\hat{f} = f)}{\text{Prob}(\hat{f}' = f)} \notag \\
    =& \frac{\exp(\epsilon q_{\hat{f}} / 4)}{\exp(\epsilon q_{\hat{f}'} / 4)}  \frac{\exp(\epsilon q_{\hat{f}'} / 4) + \exp(\epsilon q_{-\hat{f}'} / 4)}{\exp(\epsilon q_{\hat{f}} / 4) + \exp(\epsilon q_{-\hat{f}} / 4)} \notag \\
    =& \exp\left(\frac{\epsilon(q_{\hat{f}} - q_{\hat{f}'})}{4}\right) \frac{\exp(\epsilon (q_{\hat{f}} + 2) / 4) + \exp(\epsilon (q_{-\hat{f}} + 2) / 4)}{\exp(\epsilon q_{\hat{f}} / 4) + \exp(\epsilon q_{-\hat{f}} / 4)} \notag \\
    \leq& \exp\left(\frac{2 \epsilon}{4}\right) \exp\left(\frac{2 \epsilon}{4}\right) \frac{\exp(\epsilon q_{\hat{f}} / 4) + \exp(\epsilon q_{-\hat{f}} / 4)} {\exp(\epsilon q_{\hat{f}} / 4) + \exp(\epsilon q_{-\hat{f}} / 4)} \notag \\
    =& \exp(\epsilon).
\end{align}
\end{proof}

\section{Test Accuracy of Other ZO Methods on CIFAR-10 Dataset}
\label{appd:zo acc cifar-10}

\begin{figure}
    \centering
    \includegraphics[width=0.4\textwidth]{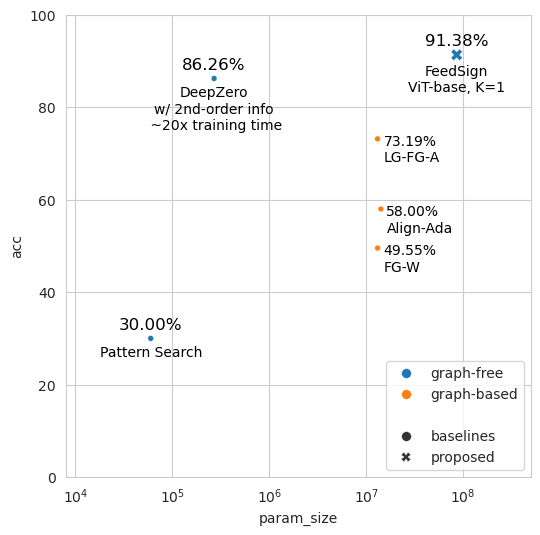}
    \caption{Test accuracy on CIFAR-10 dataset of some ZO baselines.\label{fig:CIFAR_ZO}}
\end{figure}

We list the test accuracy on the CIFAR-10 dataset obtained by some previous ZO from-the-scratch methods in \Cref{fig:CIFAR_ZO}.
 
These methods can be further split into two categories, \textit{computation graph-based} and \textit{computation graph-free} methods. Computation graph-based methods \cite{boopathy2022train, ren2022scaling} require information on the architecture of the neural networks, and the performance of such algorithms can be highly dependent on the structure of the DL models. The latter category \cite{chiang2022loss, chen2023deepzero} of algorithms relies solely on a finite number of objective values and is free from extra memory or computation cost for computational graphs but at the price of slower convergence. 
 
To conduct a fair comparison, we present the test accuracy when \emph{FeedSign} is run with only one client to simulate a centralized training manner since the listed baseline performances are all obtained under a centralized learning setting.

Notably, we are not able to compare our approach with DeepZero at the same scale. This is because in DeepZero, the authors used ResNeta-20, a version of ResNet tailored for images of small sizes ($3 \times 32 \times 32$, the standard CIFAR-10 size). Unfortunately, there are no available ``pre-trained models'' for models at this scale. In our implementation, we upsample the CIFAR-10 images to $3 \times 224 \times 224$ to adapt to the standard input shape of the standard version of ResNet-18. Moreover, DeepZero cannot scale up to ViT-large models due to prohibitively high computation overhead.

\section{Hyperparameters}

\begin{table*}[t]
\begin{center}
\caption{Memory consumption of OPT-1.3B with the MultiRC task. The reported memory does not include the cost of storing the model on the GPU. Batch size is set to $1$ to minimize the size of memory occupied by activations.}\label{table:memory_prof}
\begin{tabular}{lccc}
\toprule
\multirow{2}{*}{Task} & \emph{FeedSign}, Approach 2   & \emph{FeedSign}, Approach 1 & FO methods (common \textit{FedAvg})         \\ 
                      & Inference                             & Inference+optimizer  & Backpropagation \\ \midrule
Excess Memory (MB)    & $4027$                                & $10222$               & $46583$       \\ \bottomrule
\end{tabular}
\end{center}
\vspace{-15pt}
\end{table*}

We report the hyperparameters used in our experiments in \Cref{table:hyper}.

\begin{table*}[t]
\caption{Hyperparameters settings. *: See \Cref{table:step_and_perturbation} for details.}
\label{table:hyper}
\centering
\begin{tabular}{ccccccc}
\toprule
\multicolumn{1}{c}{}     & $B$                   & $T$                              & $\eta$             & $K$                   & $\mu$                      & $\beta$ \\ \hline
\multirow{2}{*}{\Cref{table:nlp_roberta_results}}                  &           \multirow{2}{*}{$64$}            &                  \multirow{2}{*}{$1 \times 10^{5}$}                &          $1 \times 10 ^ {-6}$ for \emph{ZO-FedSGD},          &            \multirow{2}{*}{$5$}           &              \multirow{2}{*}{$1 \times 10^{-3}$}              &            \multirow{2}{*}{-}                    \\
& & & $5 \times 10^{-5}$ for \emph{FeedSign} & & & \\ \hline
\multirow{2}{*}{\Cref{table:nlp_opt_results}}                  &          \multirow{2}{*}{$16$}             &                     \multirow{2}{*}{$2 \times 10^{4}$}             &           $1 \times 10 ^ {-7}$ for \emph{ZO-FedSGD},         &              \multirow{2}{*}{$5$}         &              \multirow{2}{*}{$1 \times 10^{-3}$}              &              \multirow{2}{*}{-}                  \\
 & &   &  $5 \times 10^{-6}$ for \emph{FeedSign}  & &  &  \\ \hline
\multirow{2}{*}{\Cref{table:vision models}}                  &               \multirow{2}{*}{$16$}        &             $2 \times 10^4$ for CIFAR-10                      &          \multirow{2}{*}{$1 \times 10^{-3}$}          &              \multirow{2}{*}{$5$}          &             \multirow{2}{*}{$1 \times 10^{-5}$}                &      \multirow{2}{*}{-}                           \\ 
&& $6 \times 10^4$ for CIFAR-100 & &&& \\ \hline
\multirow{2}{*}{\Cref{table:vision models byzantine}, \Cref{fig:bzt}} & \multirow{2}{*}{$64$} & $2 \times 10^4$ for CIFAR-10 & \multirow{2}{*}{$1 \times 10^{-3}$} & \multirow{2}{*}{$5$} & \multirow{2}{*}{$1 \times 10^{-5}$} & \multirow{2}{*}{-} \\ 
&& $6 \times 10^4$ for CIFAR-100 &&&& \\ \hline
\multirow{2}{*}{\Cref{table:differentK}}                  &       \multirow{2}{*}{$16$}                &         \multirow{2}{*}{*}                         &          $1 \times 10 ^ {-7}$ for \emph{ZO-FedSGD},          &            \multirow{2}{*}{*}           &                 \multirow{2}{*}{$1 \times 10^{-3}$}           &                     \multirow{2}{*}{-}           \\ 
&   &    &  $5 \times 10^{-6}$ for \emph{FeedSign}  &   &   &\\ \hline
\multirow{2}{*}{\Cref{table:non_iid}} & \multirow{2}{*}{$16$} & \multirow{2}{*}{$6 \times 10^4$} & $1 \times 10 ^ {-7}$ for \emph{ZO-FedSGD},       & \multirow{2}{*}{$5$} & \multirow{2}{*}{$1 \times 10^{-3}$} & \multirow{2}{*}{$1.0$}         \\ 
                         &                       &                                  & $5 \times 10^{-6}$ for \emph{FeedSign} &                       &                            &                                \\ \hline
\multirow{2}{*}{\Cref{table:byzantine_robustness}}                  &          \multirow{2}{*}{$16$}             &                 \multirow{2}{*}{$6 \times 10^4$}                 &          $1 \times 10 ^ {-7}$ for \emph{ZO-FedSGD},          &             \multirow{2}{*}{$5$}          &             \multirow{2}{*}{$1 \times 10^{-3}$}               &          \multirow{2}{*}{-}                      \\ 
&  &   &  $5 \times 10^{-6}$ for \emph{FeedSign} &    &   &    \\ \hline
\Cref{fig:noniid} & $64$ & $1.2 \times 10^{5}$ & $1 \times 10^{-4}$ & $25$ & $1 \times 10^{-5}$ & $1.0$ \\
\bottomrule
\end{tabular}

\end{table*}

\begin{table}[h]
\caption{Step budgets and numbers of perturbations used in \Cref{table:differentK}.}
\label{table:step_and_perturbation}
\centering
\begin{tabular}{ccc}
     \toprule
     $K$ & Step budget & The number of perturbations $T$ \\
     \hline
     \emph{MeZO} & $6 \times 10^4$ & $6 \times 10^4$ \\
     \hline
     $5$ & $6 \times 10^4$ & $3 \times 10^5$ \\
     $25$ & $1.2 \times 10^4$ & $3 \times 10^5$ \\
     \bottomrule
\end{tabular}
\end{table}

In \Cref{table:nlp_roberta_results} and \Cref{table:nlp_opt_results}, we kept the number of perturbations rather than step budget aligned to \emph{MeZO}. 
In \Cref{table:differentK}, we align the step budgets for $K=5$ in our comparison with the centralized counterpart. However, since the computation complexity scales to the number of perturbations hence to client pool size $K$ also, we report the result of $K=25$ with $1/5$ of step budgets. Other hyperparameters in \Cref{table:nlp_roberta_results}-\ref{table:differentK} is set to be consistent to \emph{MeZO} \cite{malladi2023fine}. 
Other hyperparameters in \Cref{table:nlp_roberta_results} are set to be consistent to \emph{MeZO} \cite{malladi2023fine}. 
In language model experiments, we set a larger learning rate $\eta$ since \emph{FeedSign} yields a smaller gradient norm since the amplitude is discarded. We believe that this will be partially accounted for outperforming the reported figures in \emph{MeZO} in several instances.

Additionally, we added a random multiplier following $1 + \mathcal{N}(0, 1)$ to gradient projection estimates of both \emph{ZO-FedSGD} and \emph{FeedSign} to simulate a high data heterogeneity with a high value of $c_g$ in \Cref{thm:1} in \Cref{fig:noniid}, apart from higher $\sigma_h$ caused by Dirichlet distributed client dataset.

\color{black} %

\section{Implementation Details}

\subsection{Experimental Settings}

To ensure consistency with previous research, we run the evaluation on RoBERTa-large, OPT-125M, and OPT-13B as is done in \emph{MeZO}. Additionally, we adapt the method to image models and run evaluations on ViT-base. Language models are run on a PS equipped with $8$ NVIDIA A100-80GB GPUs, and image models are run on a smaller PS equipped with $6$ NVIDIA GeForce RTX 3090 GPUs.

For hyperparameters, we follow the configuration of \emph{MeZO} for language models and develop our own set of parameters for image models. The number of participating clients is set to $K=5$. We set the random seed to $t$ at $t$-th step in \emph{FeedSign}.

In language models, the approach uses prompts that ensure the objective is close to that of the pertaining in finetuning, guaranteeing its good performance. In image models, the most straightforward way to adapt a pre-trained model to a new dataset with different number of classes is to change the size of the classifier layer.

\subsection{Model Parameter Update Using ZO Methods}
\label{appd:memory}

Two approaches can be used to update the model parameters in PyTorch: \begin{enumerate}
    \item Put the SPSA gradient estimate to the corresponding \verb|param.grad|, and use the standard PyTorch \verb|optimizer.step()| to update the model parameters.
    \item Inplace subtracting the entries of the \verb|state_dict| object of the PyTorch \verb|model| by the SPSA gradient estimate.
\end{enumerate}

In terms of memory requirement, our methods is essentially same to what is done in \cite{malladi2023fine}. The benefit of the first approach is that it is compatible with optimizers like Adam and RMSProp provided by PyTorch, and the drawback is that those optimizers often use momentum, resulting in $2$x or $3$x times the memory consumption compared to model inference, but still significantly smaller than that of the memory consumption of FO methods. The second approach consumes the exact same amount of memory compared to inference, however, it can result in slower convergence. We use the first approach for image models and the second for language models. We present a memory profiling in \Cref{table:memory_prof}. 

\subsection{Implementing \textit{FedSGD} over LLM}

A common way of building an FL system simulation is maintaining $K+1$ model instances in the memory ($K$ as the clients, one as the PS). However, the largest experiment we have carried out is FO full-parameter fine-tuning OPT-13B with \textit{FedAvg} over $K=5$ clients. Fine-tuning an instance of an OPT-13B model costs $316$ GB ($4$xA100 GPU) GPU memory, and $6$ of them is unaffordable for us. So we make use of the behavior of the auto-differentiation of PyTorch to detour the problem.

In a nutshell, we simulate only the global model on a virtual PS in the memory, and the local updates from different clients are accumulated to \verb|param.grad| by calling \verb|loss.backward()| for $K$ times, each time on the \verb|loss| computed from a corresponding client. The \verb|loss.backward()| implicitly simulates three steps in a communication round: 1) clients computing local updates, 2) clients sending local updates to the PS, 3) PS aggregates the local gradients. Next, a call of \verb|optimizer.step()| subtracts the parameters by the corresponding entries of \verb|param.grad|, simulating the global model marching a step along the gradient direction and broadcasting the updated model. In this way, we simulate the \textit{FedSGD} over $K$ clients using only $1$x inference memory.

We illustrate the process using the following snippet.

\vspace{15pt}

\begin{lstlisting}[language=Python]
for t in range(T):
    optimizer.zero_grad()
    for c in range(C):
        '''
        sample a batch from its private dataset
        calculate local loss
        '''
        loss.backward()     # accumulate local gradients to global model
    optimizer.step()
\end{lstlisting}

\section{When Will \emph{FeedSign} be Unbiased?}

It is clear that \emph{FeedSign} does not provide general unbiased gradient estimation. We will discuss a specific case where \emph{FeedSign} is unbiased.

Assume the true gradient projection for a specific model is $p$, and the estimation yielded by \emph{ZO-FedSGD} is $p_1$ and \emph{FeedSign} $p_2$. Consider a noise $n$ with CDF $F(x)$ on $p$ due to batch gradient estimation, specifically, $p_1 = p + n$, then \begin{align}
    \text{Prob}(p_2 = 1) &= F(p), \\
    \text{Prob}(p_2 = -1) &= F(-p).
\end{align}

To make $p_2$ an unbiased estimator of $p$, we need \begin{align}
    \mathbb{E}[p_2] = F(p) - F(-p) = p
\end{align}
for any $p$ in the support of $F(x)$. This result implies that \emph{FeedSign} is unbiased only when \begin{enumerate}
    \item the noise CDF is uniform on $[-1, 1]$,
    \item $p$ takes value on $[-1, 1]$ only.
\end{enumerate}

This is an obviously unrealistic setting. However, there could be methods that distort the true distribution of $p$ and $n$ so that they behave similarly to the abovementioned case, making \emph{FeedSign} an unbiased FL method. This could be a topic for future investigation.

\color{black}

\section{Future Research Directions}

This work opens new topics for future investigations, including:

\textbf{Convergence acceleration by seed selection.} As pointed out in \Cref{thm:1} and Remark \ref{rmk:crc}, descent directions are not equal. However, the methods under investigation employ an \emph{isotropic} exploration tactic, that is, indiscriminately accepting an assigned gradient direction. Although it performs well in most cases, it contradicts the \emph{anisotropic} \cref{ass:r-effective}, and may perform worse under certain settings as \Cref{table:nlp_roberta_results_512}, suggesting that there may be potential for improvement in convergence.

\begin{table*}[h]
    \setlength{\tabcolsep}{5pt}
    \caption{Results on RoBERTa-large over language tasks. The best results obtained using federated ZO optimization is \textbf{bolded}, and the metric gap to that of the FO method is reported in the rightmost column.}
    \label{table:nlp_roberta_results_512}
    \begin{center}
        \begin{tabular}{lccccccc}
            \toprule
            \multicolumn{1}{c}{Task}  &\multicolumn{1}{c}{\bf SST-2}  &\multicolumn{1}{c}{\bf SST-5}  &\multicolumn{1}{c}{\bf SNLI}  &\multicolumn{1}{c}{\bf MNLI} &\multicolumn{1}{c}{\bf RTE}  &\multicolumn{1}{c}{\bf TREC} & \multirow{2}*{Gap} \\ 
            \multicolumn{1}{c}{Type} & \multicolumn{2}{c}{---- sentiment ----} & \multicolumn{3}{c}{- natural language inference -} & \multicolumn{1}{c}{-- topic --} & \\
            \hline
            Zero-shot & 79.0 & 35.5 & 50.2 & 48.8 & 51.4 & 32.0 & -- \\
            \hline
            \multicolumn{7}{c}{$\text{number of shots}=512$} \\
            \hdashline
            FO & 93.9 & 55.9 & 88.7 & 84.4 & 82.7 & 97.3 & -- \\
            \hdashline
            MeZO & 93.3 & 53.2 & 83.0 & 78.3 & 78.6 & 94.3 & -3.7 \\
            \hdashline
            \emph{ZO-FedSGD} & \bf 93.3&	\bf 52.0&	\bf 84.1&	\bf 76.7&	\bf 77.8 &	\bf 94.3& \bf -4.0 \\
            & (0.4)&	(1.3)&	(0.6)&	(1.8)&	(1.1) &	(1.1)& -- \\
            \bf \emph{FeedSign} & 92.6&	50.1&	83.0&	75.6&	76.3&	93.1& -5.3 \\
            & (0.7)&	(0.3)&	(0.5)&	(0.3)&	(3.5)&	(0.6)& -- \\
            \bottomrule
        \end{tabular}
    \end{center}
\end{table*}

\textbf{Differentially private variants.} Although this work proposed a differentially private version of \emph{FeedSign}, the convergence-privacy trade-off is not clearly characterized. Moreover, there are other metrics of privacy characterization, and there could be an underexplored complicated interplay between convergence and privacy.

\textbf{Fine-tuning acceleration by inference acceleration.} For similar designs that use backward passes-free fine-tuning, proper forward pass acceleration techniques can be integrated for a faster FFT process in real-world time.




\end{document}